\DeclareMathOperator{\Var}{Var}
\DeclareMathOperator{\vech}{vech} 
\newtheorem{theorem}{Theorem}[section]
\newtheorem{assumption}{Assumption}[section]
\newtheorem{corollary}{Corollary}[section]
\newtheorem{lemma}{Lemma}[section]
\theoremstyle{definition}
\newtheorem{definition}{Definition}[section]
\newtheorem{remark}{Remark}[section]
\providecommand{\keywords}[1]{\small\textbf{\textit{Keywords:}} #1}
\newcommand{\jel}[1]{\small\textbf{\textit{JEL:}} #1}
\title{Identification and Estimation of Simultaneous Equation Models Using Higher-Order Cumulant Restrictions}
\author{Ziyu Jiang \\ UCL}
\date{\today}
\begin{document}

\maketitle

\begin{abstract}
Identifying structural parameters in linear simultaneous-equation models is a longstanding challenge. Recent work exploits information in higher-order moments of non-Gaussian data. In this literature, the structural errors are typically assumed to be uncorrelated so that, after standardizing the covariance matrix of the observables (whitening), the structural parameter matrix becomes orthogonal—a device that underpins many identification proofs but can be restrictive in econometric applications. We show that neither zero covariance nor whitening is necessary. For any order \(h>2\), a simple diagonality condition on the \(h\)th-order cumulants \emph{alone} identifies the structural parameter matrix—up to unknown scaling and permutation—as the solution to an eigenvector problem; no restrictions on cumulants of other orders are required. This general, single-order result enlarges the class of models covered by our framework and yields a sample-analogue estimator that is \(\sqrt{n}\)-consistent, asymptotically normal, and easy to compute. Furthermore, when uncorrelatedness is intrinsic—as in vector autoregressive (VAR) models—our framework provides a transparent overidentification test. Monte Carlo experiments show favorable finite-sample performance, and two applications—\emph{Returns to Schooling} and \emph{Uncertainty and the Business Cycle}—demonstrate its practical value.

\end{abstract}

\keywords{simultaneous equations, factor model, VAR, higher-order cumulants, independent component analysis}

\jel{C10, C30, C32, C38}

\section{Introduction and Literature Review}

Linear simultaneous-equation models are among the most commonly used tools in economics. Their appeal lies in their ability to capture equilibrium relationships and other scenarios where variables are determined jointly. However, as illustrated by the canonical ``supply and demand'' system, the structural equations that encode these relationships are typically not identified.

As \citet{HAUSMAN1983} discusses, the predominant solution to this identification challenge has been the instrumental variables (IV) approach. Yet, finding valid instruments can be difficult, as it often requires extensive structural modeling, deep subject-matter knowledge, and considerable ingenuity. This difficulty motivates a natural question: Is there a pathway forward when instruments are unavailable?

Such concerns have appeared frequently in econometric applications. A classic non-identification result arises in the vector autoregression (VAR) literature: if no further restrictions are imposed and one assumes a simple setting in which structural errors are uncorrelated and normally distributed, then the distribution of reduced-form errors---completely characterized by their mean and covariance---is insufficient to identify the structural parameters (see \cite{LANNE2017288} for a detailed argument). A common remedy involves exploiting higher-order information, such as identification from changes in second moments (heteroskedasticity), which has spawned a rich body of literature proposing influential identification strategies (e.g., \citet{10.1162/003465303772815727,LANNE2010121}; see \citet{annurev:/content/journals/10.1146/annurev-economics-070124-051419} for a comprehensive review). Here, we pursue higher-order moments in a more direct manner---namely, by moving away from the Gaussian assumption and focusing on distributions that encode additional information in their higher moments, such as skewness and kurtosis. Pioneering work in this direction can be traced back to \citet{bbfc08fa-3d75-37e2-b3fe-02d4101a7538} and \citet{22ff115f-e119-3e81-8033-1597bfc7003b}, who showed that, in errors-in-variables settings, non-Gaussianity can deliver identification without instruments. 

Beyond econometrics, the assumption of mutually independent, non-Gaussian errors has also been widely investigated in signal processing. Under this framework, \citet{COMON1994287} developed one of the most influential blind source separation methods, later termed independent component analysis (ICA). In the simultaneous-equations context, ICA theory implies that if the structural errors are mutually independent and non-Gaussian (with at most one Gaussian component), then the structural parameter matrix (which defines the linear relationships across equations) is identified up to signed permutation and scale. Although this does not provide full point identification in every case, it is still a powerful starting point that, coupled with basic and relatively uncontroversial economic theory, can deliver exact identification.

In principle, ICA identification results translate into viable estimators. One prominent example is the FastICA algorithm introduced by \citet{761722}, which has been widely applied in fields ranging from telecommunications to medical imaging \citep{hyvarinen2013independent}. Nevertheless, within econometrics---especially in microeconometrics---ICA-based estimators have seen limited use. One reason is that the statistical properties of these estimators are often not the central focus of ICA research, with \citet{10.1214/009053606000000939} standing out as a notable exception. A more fundamental modeling concern arises from how structural errors are typically viewed in econometrics: rather than ``signals'' broadcasting independently, they are usually conceived as latent factors, which makes the strong assumption of mutual independence seem less realistic.

\citet{bonhomme2009consistent} are among the earliest econometric attempts to move beyond the strict independence setup in a linear factor model with additive error, allowing error terms that may be correlated to some degree. Working in this setting, their identification argument shows that second- to fourth-order information can identify part or all of the loading/structural matrix, with the identifiable portion shrinking as error correlation strengthens; they also propose a JADE-type joint-diagonalization estimator that is consistent and asymptotically normal. Subsequently, in the SVAR literature, \citet{lanne2021gmm} and \citet{guay2021identification} show that, while maintaining uncorrelated structural errors, identification follows from imposing diagonality of either the third- or fourth-order cumulant tensor.\footnote{Formal definitions appear in Section~2.} \citet{mesters2024non} extend this line of work, still assuming uncorrelated structural errors, by allowing richer dependence structures beyond diagonal fourth cumulants. 

A recurring theme in this literature is the assumption that structural errors are uncorrelated. The chief reason is that it enables whitening, a preprocessing step that transforms the variables so that the mixing/structural‑coefficient matrix becomes orthogonal. This whitening step underpins most existing identification proofs. In structural VAR (SVAR) settings the assumption is usually innocuous, because uncorrelated shocks are both standard and, to some extent, desirable. In many other simultaneous-equation settings, however, the assumption is overly restrictive: it rules out linear dependence among structural errors, including dependence induced by measurement error or omitted common shocks. For instance, if an omitted causal variable appears in two equations, it induces correlation between their structural errors, violating the assumption. 

This paper's first contribution is to show that the zero-covariance assumption is unnecessary for structural-parameter identification when a diagonal condition on a single higher-order moment (cumulant) is available. In Section~5 we prove that, under a diagonal higher-order cumulant assumption, whitening is unnecessary: the identification problem reduces to a simple eigenvector problem. This argument is both simple and constructive: it leads directly to a straightforward sample-analogue estimator. Moreover, even in settings such as SVAR models—where relaxing the uncorrelatedness assumption is arguably less crucial—our framework still offers a transparent test of that uncorrelatedness within the same higher-order cumulant framework used in earlier work.

Our second contribution is to propose a conceptually and computationally simple estimator with desirable statistical properties. In Section~6, we establish that the sample analogue estimator derived from our identification argument is consistent and asymptotically normal (under standard regularity conditions). We also show how the identification framework simplifies both the asymptotic analysis and the implementation. Section~7 then evaluates the estimator’s finite-sample performance in Monte Carlo experiments, and we conclude by illustrating its practical usefulness through two empirical applications.

\section{Preliminaries: Higher-Order Cumulants}

Let $X \in \mathbb{R}^{d}$ be a random vector. Its characteristic function is
\begin{equation}\label{eq:charfun}
\phi_{X}(t)\;=\; \mathbb{E}\!\left[\exp\bigl(\mathrm{i}\,t^{\mathsf T}X\bigr)\right],
\qquad t \in \mathbb{R}^{d}.
\end{equation}

\noindent
The cumulant generating function is
\begin{equation}\label{eq:CGF}
\psi_{X}(t)
  \;=\; \log \phi_{X}(t)
  \;=\; \sum_{r=1}^{\infty} \frac{\mathrm{i}^{\,r}}{r!}
        \sum_{i_{1},\dots,i_{r}=1}^{d}
        \!\!\bigl(t_{i_{1}}\cdots t_{i_{r}}\bigr)\,
        C_{r}\!\bigl(X_{i_{1}},\dots,X_{i_{r}}\bigr).
\end{equation}
Here each index $i_{1},\dots,i_{r}$ runs independently over $\{1,\dots,d\}$, so the inner sum contains $d^{r}$ terms.

\begin{equation}\label{eq:cumulant-def}
C_{r}(X_{i_{1}},\dots,X_{i_{r}})
    \;=\;\frac{1}{\mathrm{i}^{\,r}}\,
          \biggl.\frac{\partial^{r} \psi_{X}(t)}
                   {\partial t_{i_{1}}\cdots\partial t_{i_{r}}}\biggr|_{t=0},
\end{equation}
which defines the \emph{$r$th-order cumulants}. Collecting all such coefficients yields the $r$th-order tensor $C_{r}(X)\in(\mathbb{R}^{d})^{\otimes r}$.

For expositional clarity, the main text focuses on the third- and fourth order cases; extensions to any fixed order $r\ge 3$ are stated and proved in the appendix. Write $\mu_i:=\mathbb{E}[X_i]$ and $\mathrm{Cov}(X_a,X_b):=\mathbb{E}\!\big[(X_a-\mu_a)(X_b-\mu_b)\big]$. Then, assuming the relevant moments exist,
\begin{equation}\label{eq:C3-nd}
\begin{aligned}
C_{3}(X)_{ijk}
&= \mathbb{E}\!\big[(X_i-\mu_i)(X_j-\mu_j)(X_k-\mu_k)\big]  \\
&= \mathbb{E}[X_i X_j X_k]
   - \mu_i\,\mathbb{E}[X_j X_k]
   - \mu_j\,\mathbb{E}[X_i X_k]
   - \mu_k\,\mathbb{E}[X_i X_j]
   + 2\,\mu_i\mu_j\mu_k \, ,
\end{aligned}
\end{equation}
\begin{equation}\label{eq:C4-nd}
\begin{aligned}
C_{4}(X)_{ijkl}
&= \mathbb{E}\!\big[(X_i-\mu_i)(X_j-\mu_j)(X_k-\mu_k)(X_l-\mu_l)\big] \\
&\quad - \mathrm{Cov}(X_i,X_j)\,\mathrm{Cov}(X_k,X_l)
      - \mathrm{Cov}(X_i,X_k)\,\mathrm{Cov}(X_j,X_l)
      - \mathrm{Cov}(X_i,X_l)\,\mathrm{Cov}(X_j,X_k) \, .
\end{aligned}
\end{equation}
In the special case $\mathbb{E}[X]=0$, these reduce to
\begin{equation}\label{eq:C3}
C_{3}(X)_{ijk}
    \;=\; C_{3}(X_{i},X_{j},X_{k})
    \;=\; \mathbb{E}\!\bigl[X_{i}X_{j}X_{k}\bigr],
\end{equation}
\begin{equation}\label{eq:C4}
\begin{aligned}
C_{4}(X)_{ijkl}
  \;=\; \mathbb{E}\!\bigl[X_{i}X_{j}X_{k}X_{l}\bigr]
          -\mathbb{E}[X_{i}X_{j}]\,\mathbb{E}[X_{k}X_{l}]
          -\mathbb{E}[X_{i}X_{k}]\,\mathbb{E}[X_{j}X_{l}]
          -\mathbb{E}[X_{i}X_{l}]\,\mathbb{E}[X_{j}X_{k}] \, .
\end{aligned}
\end{equation}

By convention, the entries of the third cumulant tensor $C_3(X)_{ijk}$ with $i=j=k$ are referred to as diagonal entries. Denoting the $i$th diagonal entry by $\kappa_3(X_i)$, we have
\[
\kappa_3(X_i)=C_3(X)_{iii}=\mathbb{E}\!\big[(X_i-\mu_i)^3\big],
\]
the skewness of $X_i$. Similarly, the $i$th diagonal entry of the fourth cumulant tensor is
\[
\kappa_4(X_i)=C_4(X)_{iiii}=\mathbb{E}\!\big[(X_i-\mu_i)^4\big]-3\Big(\mathbb{E}\!\big[(X_i-\mu_i)^2\big]\Big)^2,
\]
commonly called the excess kurtosis.\footnote{%
  In this paper, ``skewness'' and ``excess kurtosis'' denote the \emph{raw} third and
  fourth cumulants; the variance-normalized versions are not used.}

Cumulants have many useful properties. The following list records those used in this paper.

\begin{enumerate}
    \item \textbf{Symmetry.} Cumulant tensors are symmetric: they are invariant under permutation of indices. For example, $C_3(X)_{ijk}=C_3(X)_{ikj}=C_3(X)_{jki}=\cdots$.

    \item \textbf{Translation invariance.} For any deterministic $m\in\mathbb{R}^d$ and any $r\ge 2$,
    \[
      C_r(X+m)=C_r(X).
    \]
    In particular, the choice to center or not only affects how formulas are written; cumulants of order $r\ge 2$ are unchanged by shifts.

    \item \textbf{Multilinearity.} For real constants $\alpha,\beta$ and random variables $U_1,\dots,U_r,V,W$,
    \[
      C_r(U_1,\dots,\alpha V+\beta W,\dots,U_r)
      \;=\;\alpha\,C_r(U_1,\dots,V,\dots,U_r)+\beta\,C_r(U_1,\dots,W,\dots,U_r).
    \]
    Homogeneity of diagonal elements follows:
    \begin{equation}
    \begin{aligned}
       \kappa_3(\alpha X_i)=\alpha^3\kappa_3(X_i) \\
        \kappa_4(\alpha X_i)=\alpha^4\kappa_4(X_i) 
        \label{eq:Homogeneity}
    \end{aligned}
    \end{equation}

     If all cross-cumulants of $(X_i,X_j)$ are zero, then separability holds:
    \begin{equation}
       \kappa_r(X_i+X_j)=\kappa_r(X_i)+\kappa_r(X_j).
       \label{eq:separability}
    \end{equation}

    \item \textbf{Independence and mean independence.}
    If the components of $X$ are mutually independent, then all off-diagonal entries of all cumulant tensors vanish. For $r=3,4$ this is immediate from \eqref{eq:C3-nd}–\eqref{eq:C4-nd}; for general $r$ see \citet{012b02ed-d7ee-3600-b1f1-5d16bbb9ba81}.  
    Moreover, if for each $i$ we have \emph{mean independence}, $\mathbb{E}[X_i\mid X_{-i}]=\mathbb{E}[X_i]$ (where $X_{-i}=(X_j)_{j\ne i}$), then all off-diagonal entries of $C_3(X)$ are zero. For example, for pairwise distinct $i,j,k$,
    \[
      \mathbb{E}\!\big[(X_i-\mu_i)(X_j-\mu_j)(X_k-\mu_k)\big]
      = \mathbb{E}\!\big[(X_j-\mu_j)(X_k-\mu_k)\,\mathbb{E}(X_i-\mu_i\mid X_{-i})\big]
      = 0.
    \]

    \item \textbf{Gaussian and elliptical families.}
    For Gaussian random vectors, $C_r(X)=0$ for all $r\ge3$. For elliptical random vectors, all odd-order cumulants vanish for $r\ge3$ (whenever the corresponding moments exist); see \citet{012b02ed-d7ee-3600-b1f1-5d16bbb9ba81} and \citet{rao2021asymptotic}.
\end{enumerate}



\section{Baseline Model Setup}
This section introduces the setup for our baseline model. Although the assumptions used here are already weaker than those in much of the current literature, they can be relaxed further. For clarity, we postpone the discussion of possible relaxations until Section~5.

Let $S$ be an unobserved $d$-dimensional random vector of structural errors, where $S_q$ denotes its $q$-th component. We observe a $d$-dimensional random vector $X$ generated by the linear structural equation system
\[
\Lambda X = S.
\]

Our primary goal is to recover the structural parameter matrix $\Lambda$ using $n$ independent and identically distributed (i.i.d.) observations of $X$.

The model imposes the following restrictions on $\Lambda$ and a fixed $h$th \((h > 2)\) cumulant tensor of $S$:
\begin{itemize}
    \item[A1)] \textbf{Invertibility:} The $d \times d$ matrix $\Lambda$ is nonsingular, with its inverse denoted $\Lambda^{-1} = A$.
    \item[A2)] \textbf{Nonzero diagonal entries:} $\kappa_h(S_{q}) \neq 0$ for all $q \in \{1,\dots,d\}$.
    \item[A3)] \textbf{Zero off-diagonal entries:} The off-diagonal entries of $C_h(S)$ are zero; i.e., $C_h(S)_{i_1,\dots,i_h}=0$ whenever not all indices $i_1,\dots,i_h$ are equal.
\end{itemize}

Before delving into these assumptions, two observations are in order. First, write $\mu_X:=\mathbb{E}[X]$ and $\mu_S:=\mathbb{E}[S]$, so $\Lambda \mu_X=\mu_S$, and define the centered variables
\[
X_c := X-\mu_X,\qquad S_c := S-\mu_S,\qquad \text{so that}\quad \Lambda X_c = S_c.
\]
In the upcoming sections, for estimation, we preprocess by subtracting the sample mean, and in the population, we work with $X_c$ and $S_c$. Because cumulants of order $h\ge 2$ are translation invariant, we have
\[
C_h(X)=C_h(X_c),\qquad C_h(S)=C_h(S_c),\qquad \kappa_h(S)=\kappa_h(S_c),
\]
so working with the mean zero variables $X_c$ and $S_c$ yields simpler and clearer arguments without loss of generality for both identification and estimation. Second, although the current model representation lacks exogenous observable covariates, these can be incorporated through a simple partialling-out argument.

Assumption A1) is standard: the observable variables $X$ can be written as a linear combination of the structural errors $S$ via the matrix inverse $\Lambda^{-1} = A$. The matrix $A$ is commonly referred to as the mixing matrix. While this assumption is natural in simultaneous-equation models, it is less convenient in factor-model settings where the loading/mixing matrix is typically rectangular. In Section~5, we show how to relax A1 to require only that $A$ have full column rank.

Assumption A3) implies that the $h$th cumulant tensor $C_h(S)$ is diagonal; A2) additionally rules out zero diagonal entries. Note that the commonly used non‑Gaussian, mutually independent assumption for structural errors is stronger than what is imposed here: if all components of $S$ are mutually independent, then (by the standard additivity property of cumulants) all order-$r\ge 3$ cumulant tensors of $S$ are diagonal. Furthermore, if these components deviate from Gaussianity due to asymmetry (nonzero skewness) or heavy or light tails (nonzero excess kurtosis), the third or fourth cumulant will have nonzero diagonal entries. Many classical ICA algorithms and identification protocols based on higher-order information, either implicitly or explicitly, exploit these diagonal structures. In particular, the diagonality of the fourth cumulant, alongside zero covariance (i.e., a diagonal second cumulant), is often the central assumption guaranteeing identification.

As is evident from our assumptions, we depart from the conventional approach—which typically combines a diagonal covariance (second cumulant) with a diagonal order-$h$ cumulant—and show that diagonality of a single higher-order cumulant by itself is sufficient for both identification and estimation, without imposing any second-moment restriction. This relaxation is important not only because it grants greater modeling flexibility, but also because it implies that previously used specifications are overidentified and hence testable. For concreteness, we focus on the case $h=3$, with occasional references to the kurtosis case ($h=4$). There are two main reasons for this choice. First, the structure of theorems and proofs remains the same for any fixed $h>2$, but skewness ($h=3$) yields the cleanest and most interpretable representation. Second, the $h=3$ case benefits most from relaxing full independence, thereby enabling model setups that were previously unattainable.

Under $h=3$, the model assumptions specialize to:
\begin{itemize}
    \item[A1$^\prime$)] \textbf{Invertibility:} The $d \times d$ matrix $\Lambda$ is nonsingular, with its inverse denoted $\Lambda^{-1} = A$.
    \item[A2$^\prime$)] \textbf{Nonzero diagonal entries:} $\kappa_3(S_q) \neq 0$ for all $q$.
    \item[A3$^\prime$)] \textbf{Zero off-diagonal entries:} $C_{3}(S_i, S_j, S_k) 
= \mathbb{E}\!\left[(S_i - \mathbb{E}[S_i])(S_j - \mathbb{E}[S_j])(S_k - \mathbb{E}[S_k])\right] 
= 0 $ unless  i = j = k.
\end{itemize}

Below, we list a few examples for which these assumptions hold, alongside their corresponding econometric applications.

\begin{itemize}
\item \textbf{Independent structural errors and independent factor models.}

In SVAR models that rely on higher-order moment restrictions for identification, it is common to assume that the structural errors are independent\footnote{For third-order identification this can be relaxed to (conditional) mean independence; see Section~2.} (see, for example, \citet{LANNE2017288}, \citet{gourieroux2017statistical}, and \citet{DAVIS2023180}). Although later work has shown that identification can be achieved under weaker conditions—specifically, Assumptions A2$^\prime$) and A3$^\prime$) with $h=3$ or $h=4$ combined with uncorrelated structural errors—those requirements remain more restrictive than our framework. Consequently, the identification and estimation methods developed in this paper also apply in these setups.

It should be noted that in this literature, relaxing the uncorrelatedness assumption is often neither meaningful nor desirable, since doing so complicates the interpretation of impulse response functions. Nonetheless, a principal advantage of using Assumptions A1$^\prime$)–A3$^\prime$) is that they provide a direct way to test the more restrictive model specifications commonly employed in the literature.

Another important class in this category is the family of independent factor models. An example for practical application is \citet{CHEN2010255}, where independent factors are extracted using ICA to facilitate risk analysis of financial portfolios. Although most applications of independent factor models do not satisfy A1$^\prime$) (because the loading/mixing matrix is typically rectangular), the extension discussed in Section~5 addresses this limitation by allowing rectangular mixing matrices of full column rank.

\item \textbf{Composite structural error: skewed equation shifters + symmetric independent common effects}

Let $S$ represent a composite error,
\[
    S = \tilde S + \eta.
\]
Let $\tilde S$ have components such that $C_3(\tilde S)$ is diagonal with nonzero diagonal entries (e.g., mutually independent and skewed).
Assume that $\eta$ is generated by an independent factor model:
\begin{equation}
    \eta = \Gamma e,
\end{equation}
where the components of $e$ are mutually independent and \emph{centrally symmetric} (so $\kappa_3(e_\ell)=0$ for all $\ell$), and $e$ is independent of $\tilde S$.
From a modeling perspective, this specification imposes that the unmodeled (unobserved) common effect is comprised of a linear combination of a series of independent, symmetric effects, while each equation also has idiosyncratic skewed shifters. This data-generating process permits both correlation and common components across equations.

To see how this satisfies Assumptions A2$^\prime$–A3$^\prime$, consider the $(i,j,k)$th entry of the third \emph{cumulant} tensor:
\begin{equation}
\begin{aligned}
    C_3(S_i,S_j,S_k)
    &= C_3\!\Bigl(\tilde S_i + \sum_\ell \Gamma_{i \ell} e_\ell,\;
                   \tilde S_j + \sum_\ell \Gamma_{j \ell} e_\ell,\;
                   \tilde S_k + \sum_\ell \Gamma_{k \ell} e_\ell\Bigr) \\
    &= C_3(\tilde S_i,\tilde S_j,\tilde S_k)
       + \sum_\ell \Gamma_{i \ell}\Gamma_{j \ell}\Gamma_{k \ell}\,\kappa_3(e_\ell),
\end{aligned}
\end{equation}
where independence between $\tilde S$ and $e$, and among the $e_\ell$, eliminates mixed cross‑cumulants. Under the stated symmetry of $e_\ell$, $\kappa_3(e_\ell)=0$; hence the second term vanishes. The first term is nonzero if and only if $i=j=k$, and its diagonal entries are nonzero by construction of $\tilde S$.

An example of an econometric model in this category is the classical measurement‑error model with a skewed true variable. 
Let $\tilde x = x + e$ with classical measurement error $e$ that is symmetric and independent of $(x,\epsilon)$. We therefore have the following equation system:
\begin{equation}
\begin{aligned}
    \tilde{x} &= x + e, \\
    y &= \beta x + \epsilon \;=\; \beta \tilde{x} - \beta e + \epsilon.
\end{aligned}
\end{equation}
To see how this is a special case of the composite error setup, define $S_1 := x + e$ and $S_2 := \epsilon - \beta e$, so $S = (\tilde S_1,\tilde S_2) + (\eta_1,\eta_2)$ with $\tilde S=(x,\epsilon)$ (skewed idiosyncratic shifters) and $\eta=(e,-\beta e)$ (symmetric common effects).
To align with the model $\Lambda X = S$, set
\[
\Lambda = \begin{pmatrix} 1 & 0 \\ -\beta & 1 \end{pmatrix},
\qquad
X = \begin{pmatrix} \tilde x \\ y \end{pmatrix},
\qquad
S = \begin{pmatrix} x + e \\ \epsilon - \beta e \end{pmatrix}.
\]
The above analysis shows that the standard single‑regressor measurement‑error model is a special case of our composite‑error specification. Under the assumptions—$e$ is symmetric and independent of $(x,\epsilon)$, and $x$ and $\epsilon$ are independent—the model satisfies A3$^\prime$. It satisfies A2$^\prime$ when both $x$ and $\epsilon$ are skewed.

\emph{Importantly, even in this simplest case the traditional higher‑moment/ICA strategy fails.}
With $S_1:=x+e$ and $S_2:=\epsilon-\beta e$,
\[
\operatorname{Cov}(S)
=
\begin{pmatrix}
\operatorname{Var}(x)+\operatorname{Var}(e) & -\,\beta\,\operatorname{Var}(e)\\[2pt]
-\,\beta\,\operatorname{Var}(e) & \operatorname{Var}(\epsilon)+\beta^2\operatorname{Var}(e)
\end{pmatrix},
\]
so $\operatorname{Cov}(S)$ is non‑diagonal whenever $\beta\neq 0$ and $\operatorname{Var}(e)>0$.
Therefore, strategies that depend on the uncorrelatedness or mutual independence of the structural errors (e.g., whitening followed by joint diagonalization) are generally invalid in the presence of measurement error.
In contrast, because $e$ is centrally symmetric,
\[
C_3(S)=C_3(\tilde S),
\]
so the third‑order cumulant remains diagonal and our identification based on a single higher‑order cumulant remains valid.

In many applications, researchers face more complex measurement‑error structures—for example, multiple endogenous variables measured with error where the recording processes are intrinsically linked (see, e.g., \citet{2a001323-07d6-31c1-910a-be0b0a4eecf7}). In such settings it may remain plausible that measurement error is independent of the latent variables, but it is typically unrealistic to assume the measurement errors themselves are uncorrelated. The composite‑error specification accommodates these scenarios: if the measurement‑error vector is jointly centrally symmetric (so its third‑order joint cumulants vanish) and independent of the latent component, and if there exist skewed equation shifters, then our framework continues to apply.

Another potential scenario is omitted/unobservable effects in a simultaneous equation system. Omitted variable bias can occur if certain causal variables cannot be measured (e.g., latent ability in the returns‑to‑education literature) or if the researcher prefers to specify a smaller model. One place such model misspecification may arise is in the VAR literature, where the sample size is often too limited to estimate big systems. If the omitted variables can be modeled as symmetric, then the framework here applies, and the estimation and identification arguments in this paper offer a method robust to such omission.
\end{itemize}


\section{Identification Concept: Identification Up to a Permutation and Scaling}
\label{sec:identification_concept}

Consistent with most studies that leverage higher‑order cumulant structures for identification, Assumptions~A1--A3 at best ensure the identification of \(A\) (and hence \(\Lambda\)) only up to a permutation and scaling. A familiar illustration of this concept appears in the Gaussian VAR literature. Specifically, if one assumes independence across structural errors and imposes a triangular (or acyclic) system, then the Cholesky decomposition of the variance matrix can identify the contemporaneous interaction matrix only up to a permutation and scaling. \footnote{%
Formally, conditional on a chosen ordering, the standard Cholesky factor with positive diagonal
fixes a particular scale. In applied work it is common to re‑normalize (e.g., setting
\(\operatorname{diag}(\Lambda)=\mathbf{1}\) or shock variances to unity) after the Cholesky step.
Our discussion treats such conventional re‑scalings as part of the diagonal‑scaling indeterminacy.} Concretely:

\begin{definition}
    The mixing matrix \(A\) (or, equivalently, the structural parameter matrix \(\Lambda\)) is said to be \emph{identified up to a column permutation and scaling} if \(\widetilde{A} = A D P\) (or equivalently \(\widetilde{\Lambda} = P^{-1} D^{-1} \Lambda \)) can be identified, where \(D\) is a diagonal matrix and \(P\) is a permutation matrix.
\end{definition}

Scale indeterminacy is a routine and generally minor issue in linear models; it is typically addressed by normalizing the diagonal of \(\Lambda\) to unity. While this normalization resolves scale indeterminacy, the situation becomes more intricate when permutation indeterminacy is present. Overcoming permutation indeterminacy is often the principal hurdle between the aforementioned identification result and point identification. Fortunately, economic theory can be instrumental in pinpointing the correct permutation. Once an economically meaningful permutation or labeling is established, a diagonal normalization can then be used to eliminate scale indeterminacy.

Although resolving permutation indeterminacy is not the central focus of this paper, we outline two practical examples of how one might obtain point identification in structural models, guided by economic reasoning:

\begin{itemize}
    \item [P1] \textbf{Sign Restrictions:} 
    Define an auxiliary matrix \(\Lambda^{\text{sign}}\) to capture the sign pattern of \(\Lambda\). Its entries are given by
    \begin{equation}
        \Lambda^{\text{sign}}_{ij} = 
        \begin{cases}
            1, & \text{if }\Lambda_{ij} > 0,\\
            -1, & \text{if }\Lambda_{ij} < 0,\\
            0, & \text{if }\Lambda_{ij} = 0.
        \end{cases}
    \end{equation}
    Adopt the normalization \(\operatorname{diag}(\Lambda)=\mathbf{1}\) (so row signs are fixed). If \(\Lambda^{\text{sign}}\) has pairwise distinct rows, then one can identify a unique permutation consistent with \(\Lambda^{\text{sign}}\). This allows for a natural labeling of the structural equations, reflecting each equation's unique sign pattern. For instance, in a classic two-equation supply-and-demand system, the supply curve is upward-sloping whereas the demand curve is downward-sloping. Hence, labeling and ordering follow directly from these sign patterns.
    
    \item [P2] \textbf{Causal Ordering/Acyclicality:} 
    A common economic restriction is acyclicity, under which there exists a permutation of the
variables that renders the structural matrix triangular with nonzero diagonal (a recursive system).
When such a triangular representation exists and satisfies the usual \emph{faithfulness} condition—
namely, there is a \emph{unique} permutation \(P^\star\) for which \(P^\star \Lambda\) is triangular
with the implied zero pattern, and no other permutation yields the same triangular zero pattern
(i.e., no exact cancellations create spurious zeros)—the causal ordering is uniquely determined.
In large samples, reordering the estimated system to achieve this triangular zero pattern recovers
\(P^\star\) with probability approaching one, delivering point identification of the ordering; see,
for example, the LiNGAM approach of \cite{shimizu2006linear}.
\end{itemize}

A crucial advantage of these procedures is that, once a consistent estimator of \(\widetilde{A}\) is available, one can choose the correct permutation of its columns with probability tending to 1 as the sample size goes to infinity after estimation. In \cite{lewis2021identifying}, rules like this are referred to as \emph{consistent labeling criteria}. Theorem~4 in \cite{lewis2021identifying} shows that the error introduced by the post‑estimation labeling step is asymptotically negligible: the permutation choice does not alter the first‑order asymptotic distribution of the estimator. Consequently, once an asymptotically normal estimator of the permuted matrix is available, standard delta‑method arguments yield valid inference for normalized structural parameters.

Throughout this paper, we focus on settings analogous to P1 and P2, in which
economic theory supplies a \emph{consistent‑labeling criterion} that eliminates
both scale and permutation indeterminacies.  
In this context, once a consistent and asymptotically normal estimator of the
permuted structural‑parameter matrix \(\widetilde{A}\) is available, inference
on the structural parameters follows directly from Theorem 4 in
\cite{lewis2021identifying}.  
Consequently, our objective is to show that, under
Assumptions A1–A3, the mixing matrix \(A\) is identified up to a column
permutation and scaling, and that consistent, asymptotically normal estimators
of \(\widetilde{\Lambda}\) can be constructed.


\section{Identification}

In this section, we state the main identification theorem under the baseline model for $h=3$. Although the argument extends to all $h>2$, the general proof differs from the $h=3$ case only by minor technical details. For clarity and brevity, we relegate the general proof to the Appendix. We then discuss some useful extensions of the baseline model.

\subsection{Identification Using Third Cumulant Information Alone}

Let $X_c := X - \mathbb{E}[X]$ denote the centered observables. By translation invariance of cumulants (orders $\ge 2$), $\kappa_3(w^\top X)=\kappa_3(w^\top X_c)$. Let $w$ be a $d$-dimensional column vector and consider the objective
\begin{equation}
   Q(w) \;=\; \kappa_3\!\bigl(w^\top X\bigr)
   \;=\; \mathbb{E}\Bigl[\bigl(w^\top X_c\bigr)^3\Bigr].
\end{equation}
We begin with a lemma concerning the Hessian of this objective function.

\begin{lemma}
\label{lemma:hessian}
\textit{Under Assumptions A1$^\prime$--A3$^\prime$ and $\mathbb E\|X\|^3<\infty$,}
We have
\begin{equation}
    \nabla_w^2 Q(w) \;=\; A \,D_w \,A^\top,
\end{equation}
where $D_w$ is a diagonal matrix whose entries depend on $w$.
\end{lemma}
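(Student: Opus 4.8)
The plan is to express everything in terms of the structural errors and then differentiate. Since $\Lambda X = S$ and $A = \Lambda^{-1}$, we have $X_c = A S_c$, so $w^\top X_c = w^\top A S_c = (A^\top w)^\top S_c$. Writing $v := A^\top w$, the objective becomes
\[
Q(w) \;=\; \kappa_3\!\bigl(v^\top S_c\bigr) \;=\; \mathbb{E}\Bigl[\bigl(v^\top S_c\bigr)^3\Bigr] \;=\; \sum_{i,j,k} v_i v_j v_k\, C_3(S)_{ijk}.
\]
First I would invoke Assumption A3$^\prime$ (diagonality of $C_3(S)$) to collapse the triple sum to the diagonal terms only, giving the clean cubic $Q(w) = \sum_{q=1}^d v_q^3\, \kappa_3(S_q)$ with $v = A^\top w$. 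The moment condition $\mathbb{E}\|X\|^3 < \infty$ together with A1$^\prime$ guarantees that $\mathbb{E}\|S\|^3 < \infty$, so the expectations and the interchange of differentiation and expectation are justified.

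Next I would compute the Hessian by the chain rule. Since $v = A^\top w$ is linear in $w$, we have $\partial v_q / \partial w_a = (A^\top)_{qa} = A_{aq}$, i.e. $\nabla_w v_q = A_{\cdot q}$, the $q$th column of $A$. Differentiating $Q(w) = \sum_q v_q^3 \kappa_3(S_q)$ once gives $\nabla_w Q(w) = \sum_q 3 v_q^2 \kappa_3(S_q)\, A_{\cdot q}$, and differentiating again gives
\[
\nabla_w^2 Q(w) \;=\; \sum_{q=1}^d 6\, v_q\, \kappa_3(S_q)\, A_{\cdot q} A_{\cdot q}^\top \;=\; A \,D_w\, A^\top,
\]
where $D_w := \operatorname{diag}\bigl(6 v_1 \kappa_3(S_1), \dots, 6 v_d \kappa_3(S_d)\bigr)$ with $v = A^\top w$. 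This exhibits $D_w$ explicitly as a diagonal matrix whose entries depend on $w$ (through $v$), which is exactly the claim.

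There is no serious obstacle here; the result is essentially a bookkeeping exercise once the diagonality assumption is used. The only points requiring mild care are (i) justifying the interchange of the second derivative with the expectation — handled by dominated convergence under $\mathbb{E}\|X\|^3 < \infty$, since the integrand and its first two $w$-derivatives are dominated by polynomials in $\|X_c\|$ of degree at most three on a neighborhood of any fixed $w$ — and (ii) keeping the index bookkeeping straight when converting between the $w$ and $v = A^\top w$ coordinates, so that the outer-product matrix $\sum_q (\cdot)\, A_{\cdot q} A_{\cdot q}^\top$ is correctly recognized as $A D_w A^\top$. Note that Assumption A2$^\prime$ is not needed for this lemma (it will matter later, to ensure the diagonal entries of $D_w$ can be made nonzero); only A1$^\prime$ and A3$^\prime$, plus the moment bound, are used.
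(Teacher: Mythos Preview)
Your proposal is correct and follows essentially the same approach as the paper: substitute $v=A^\top w$, collapse the cubic to $\sum_q v_q^3\kappa_3(S_q)$ via A3$^\prime$, and apply the chain rule. The only cosmetic difference is that the paper first computes $\nabla_v^2 Q$ (diagonal) and then invokes $\nabla_w^2 Q = A\,\nabla_v^2 Q\,A^\top$, whereas you differentiate directly in $w$ and recognize $\sum_q c_q A_{\cdot q}A_{\cdot q}^\top = A\,\operatorname{diag}(c)\,A^\top$; your extra remarks on dominated convergence and on A2$^\prime$ being unused are accurate and do not appear explicitly in the paper's proof.
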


\begin{proof}
First, write $X_c = A S_c$ with $S_c := S - \mathbb{E}[S]$. Since third cumulants are translation invariant, $\kappa_3(S_i)=\kappa_3((S_c)_i)$ and $\kappa_3(w^\top X)=\kappa_3(w^\top X_c)$.
We simplify the objective:
\begin{equation}
\begin{aligned}
    Q(w) 
    &= \kappa_3\!\bigl(w^\top X_c\bigr) \\
    &= \kappa_3\!\bigl(v^\top S_c\bigr) 
    \quad \text{where } v^\top = w^\top A, \\
    &= \sum_{q=1}^{d} v_q^{\,3} \,\kappa_3(S_q),
    \quad \text{(eq (\ref{eq:Homogeneity}),eq (\ref{eq:separability}), A3$^\prime$).}
\end{aligned}
\end{equation}
Next, take the Hessian with respect to $v$. Since each term involves only a single $v_q$,
\[
\nabla^2 Q(v)_{ij} 
\;=\; 
\begin{cases}
    6\,v_i \,\kappa_3(S_i), & \text{if } i = j,\\[6pt]
    0, & \text{if } i \neq j.
\end{cases}
\]
Hence $\nabla^2 Q(v)$ is diagonal, with diagonal entries $6\,v_i\,\kappa_3(S_i)$.

Finally, recall that $v = A^\top w$. By the chain rule,
\[
\nabla_w^2 Q(w) \;=\; A \,\bigl(\nabla_v^2 Q(v)\bigr)\,A^\top 
\;\equiv\;
A \,D_w\,A^\top,
\]
where $D_w = \nabla_v^2 Q(v)$ is diagonal. This completes the proof.
\end{proof}

We now state and prove our main identification theorem:

\begin{theorem}\label{thm:eig-decomp}
Consider
\[
H \;=\; \nabla_w^2 Q(w_2)^{-1}\,\nabla_w^2 Q(w_1),
\]
with $w_1\neq w_2$. Under Assumptions A1$^\prime$--A3$^\prime$ and the condition
\[
(A^\top w_2)_i \;\neq\; 0 \quad \text{for all } i,
\]
the rows of $\Lambda$ are eigenvectors of $H$ for every $w_1$. 

In particular, if we fix $w_2=\mathbf{1}$ and assume $\sum_q A_{q i}\neq 0$ for all $i$ (so that $(A^\top w_2)_i=\sum_q A_{q i}\neq 0$), then the above condition holds. Furthermore, with this choice of $w_2$, if $w_1$ is drawn uniformly at random from the $d$-dimensional unit cube, then $H$ has (almost surely) distinct eigenvalues, and its eigenvectors are the columns of $\Lambda^\top$ up to scaling (and permutation).
\end{theorem}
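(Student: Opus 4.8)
The plan is to build everything on Lemma~\ref{lemma:hessian}. By that lemma, $\nabla_w^2 Q(w_1)=A D_{w_1} A^\top$ and $\nabla_w^2 Q(w_2)=A D_{w_2} A^\top$, where $D_{w_j}$ is diagonal with $i$th entry $6\,(A^\top w_j)_i\,\kappa_3(S_i)$. The condition $(A^\top w_2)_i\neq 0$ for all $i$, together with A2$^\prime$ ($\kappa_3(S_i)\neq 0$), guarantees $D_{w_2}$ is invertible, so $\nabla_w^2 Q(w_2)$ is invertible (since $A$ is nonsingular by A1$^\prime$). Then I would simply compute
\[
H \;=\; \bigl(A D_{w_2} A^\top\bigr)^{-1}\bigl(A D_{w_1} A^\top\bigr)
\;=\; A^{-\top} D_{w_2}^{-1} A^{-1} A D_{w_1} A^\top
\;=\; A^{-\top}\,\bigl(D_{w_2}^{-1} D_{w_1}\bigr)\, A^\top .
\]
Since $A^{-\top}=\Lambda^\top$, this reads $H = \Lambda^\top \Delta \Lambda^{-\top}$ with $\Delta:=D_{w_2}^{-1}D_{w_1}$ diagonal, i.e.\ $H$ is diagonalized by $\Lambda^\top$: the columns of $\Lambda^\top$ (equivalently the rows of $\Lambda$) are right-eigenvectors of $H$ with eigenvalues $\Delta_{ii}=(A^\top w_1)_i/(A^\top w_2)_i$. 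This holds for every $w_1$, establishing the first assertion. The specialization to $w_2=\mathbf 1$ is immediate: $(A^\top\mathbf 1)_i=\sum_q A_{qi}$, so the hypothesis $\sum_q A_{qi}\neq 0$ is exactly the stated condition.

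For the genericity claim I would argue as follows. Fix $w_2=\mathbf 1$. The eigenvalues of $H$ are $\lambda_i(w_1)=(A^\top w_1)_i/(A^\top\mathbf 1)_i$, which are linear (affine, in fact linear) functionals of $w_1$. For $i\neq j$, the set $\{w_1:\lambda_i(w_1)=\lambda_j(w_1)\}$ is the zero set of the linear form $w_1\mapsto (A^\top w_1)_i/(A^\top\mathbf 1)_i-(A^\top w_1)_j/(A^\top\mathbf 1)_j$. This form is not identically zero: its coefficient vector is $A_{\cdot i}/(A^\top\mathbf 1)_i-A_{\cdot j}/(A^\top\mathbf 1)_j$, which vanishes only if columns $i$ and $j$ of $A$ are proportional, contradicting nonsingularity of $A$ (A1$^\prime$). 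Hence each coincidence set is a proper affine hyperplane in $\mathbb R^d$, so has Lebesgue measure zero; the finite union over pairs $(i,j)$ also has measure zero. Therefore, for $w_1$ drawn uniformly on the unit cube, almost surely all eigenvalues are distinct, and a matrix with $d$ distinct eigenvalues has one-dimensional eigenspaces, so the eigenvectors are the columns of $\Lambda^\top$ uniquely up to scaling; the labeling of eigenvalue/eigenvector pairs is arbitrary, which is the permutation indeterminacy. This matches Definition~5.1 ($\widetilde\Lambda=P^{-1}D^{-1}\Lambda$).

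I anticipate the only real subtlety is the genericity step—specifically, arguing cleanly that the pairwise-coincidence locus has measure zero and tying ``distinct eigenvalues'' to ``one-dimensional eigenspaces / eigenvectors identified up to scaling.'' Everything else is the one-line matrix identity above plus bookkeeping on which invertibility hypotheses are in force. One point I would be careful to state explicitly: $H$ need not be symmetric, so I am using the elementary fact that if $H=BDB^{-1}$ with $D$ diagonal having distinct entries then the eigenvectors of $H$ are exactly the columns of $B$ up to scalar multiples (no appeal to the spectral theorem is needed). I would also note in passing that the choice $w_2=\mathbf 1$ is merely a convenient normalization; any $w_2$ meeting the non-vanishing condition works, and in the sample analogue one would pick such a $w_2$ (or average over several) — but that remark belongs to the estimation section, not the proof.
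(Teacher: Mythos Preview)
Your proposal is correct and essentially identical to the paper's own proof: both invoke Lemma~\ref{lemma:hessian} to write $H=(A^\top)^{-1}D_{w_2}^{-1}D_{w_1}A^\top=\Lambda^\top\Delta(\Lambda^\top)^{-1}$, read off the eigenpairs, and then handle genericity by showing that each pairwise eigenvalue coincidence defines a proper hyperplane (because the coefficient vector $A_{\cdot i}/(A^\top\mathbf{1})_i-A_{\cdot j}/(A^\top\mathbf{1})_j$ is nonzero by linear independence of the columns of $A$), so a finite union has Lebesgue measure zero. Your explicit remark that $H$ is not symmetric and that one only needs the elementary fact about diagonalizable matrices with distinct diagonal entries is a nice touch the paper leaves implicit.
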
 

\begin{proof}
First, write
\begin{equation}\label{eq:H_decomp}
\begin{aligned}
    H 
    &= \nabla_w^2 Q(w_2)^{-1}\,\nabla_w^2 Q(w_1) \\
    &= (A^\top)^{-1} \,D_{w_2}^{-1}\,A^{-1}\,\bigl(A\,D_{w_1}\,A^\top\bigr) \\
    &= (A^\top)^{-1} \,D_{w_2}^{-1}\,D_{w_1}\,A^\top, 
\end{aligned}
\end{equation}
where $D_{w_1}$ and $D_{w_2}$ are diagonal matrices arising from $\nabla_w^2 Q(w_1)$ and $\nabla_w^2 Q(w_2)$, respectively (by Lemma~\ref{lemma:hessian}).

Since $\Lambda = A^{-1}$, multiply appropriately to obtain
\[
H 
\;=\; 
\Lambda^\top \bigl(D_{w_2}^{-1}\,D_{w_1}\bigr) (\Lambda^\top)^{-1}.
\]
This yields an eigen-decomposition of $H$. In particular, 
$\tfrac{D_{w_1,\,ii}}{D_{w_2,\,ii}}$ is the $i$th eigenvalue, and the columns of $\Lambda^\top$ (equivalently, the rows of $\Lambda$) are the corresponding eigenvectors.

Next, fix $w_2=\mathbf{1}$ and write $v_\ell:=A^\top w_\ell$ for $\ell\in\{1,2\}$. 
Then $v_{2,i}=(A^\top w_2)_i=\sum_{q} A_{q i}$. Since 
$D_w=\operatorname{diag}\big(6\,\kappa_3(S_i)\,v_i\big)$, we have
\[
D_{w_2}
= \operatorname{diag}\big(6\,\kappa_3(S_i)\,v_{2,i}\big),
\qquad
D_{w_2}^{-1}
= \operatorname{diag}\Big(\frac{1}{6\,\kappa_3(S_i)\,v_{2,i}}\Big).
\]
Hence
\[
D_{w_2}^{-1}D_{w_1}
= \operatorname{diag}\!\Big(\frac{v_{1,i}}{v_{2,i}}\Big),
\quad\text{i.e.,}\quad
(D_{w_2}^{-1}D_{w_1})_{ii}
= \frac{(A^\top w_1)_i}{(A^\top \mathbf{1})_i}
= \frac{(w_1^\top A)_i}{\sum_{q} A_{q i}}.
\]

To show these diagonal entries (the eigenvalues) are almost surely distinct for a random $w_1$, observe that the condition 
\[
\frac{(w_1^\top A)_i}{\sum_{q} A_{q i}}
=
\frac{(w_1^\top A)_j}{\sum_{q} A_{q j}}
\]
is equivalent to
\begin{equation}\label{eq:eig_equal}
   w_1^\top 
   \Bigl(
   A_{\cdot i}\,\tfrac{1}{\sum_{q} A_{q i}}
   \;-\;
   A_{\cdot j}\,\tfrac{1}{\sum_{q} A_{q j}}
   \Bigr)
   \;=\; 0.
\end{equation}
Since $A$ is full rank and $\sum_{q} A_{q i} \neq 0$ for all $i$, the vector 
$A_{\cdot i}/\sum_{q} A_{q i} - A_{\cdot j}/\sum_{q} A_{q j}$ is nonzero whenever $i\neq j$. Hence, \eqref{eq:eig_equal} defines a $(d-1)$-dimensional hyperplane in $\mathbb{R}^d$ with Lebesgue measure zero. A uniformly random $w_1$ on the unit cube hits finitely many such hyperplanes with probability zero. Consequently, with probability one, no two eigenvalues coincide.

Finally, since each distinct eigenvalue has a one-dimensional eigenspace, the matrix of eigenvectors is unique up to scaling and permutation of columns. In other words, the recovered matrix of eigenvectors identifies $\Lambda^\top$ up to permutation and column-scaling (e.g., normalizing each eigenvector to have unit length).
\end{proof}

Working in the coordinates of the latent variable \(S\) yields straightforward proofs. However, the crucial role of Lemma \ref{lemma:hessian} can be obscured by this simplicity. Its intuition is more transparent when viewed in the coordinates of the observed \(X\) through tensor contraction.

For centered data, the third-order cumulant tensor \(C_3(X_c)\) has entries
\[
\big[C_3(X_c)\big]_{ijk}=\mathbb{E}\!\big[(X_c)_i (X_c)_j (X_c)_k\big],
\]
so for any \(w\in\mathbb{R}^d\), the expression \(C_3(X_c)[w,\cdot,\cdot]\) denotes its mode-1 contraction with \(w\), i.e., the \(d\times d\) matrix
\[
\big[C_3(X_c)[w,\cdot,\cdot]\big]_{jk}
=\sum_{i=1}^d w_i\,\big[C_3(X_c)\big]_{ijk}
=\mathbb{E}\!\big[(w^{\top}X_c)\,(X_c)_j\,(X_c)_k\big].
\]
Intuitively, we can view the third-order cumulant tensor as slices of \(d\times d\) matrices indexed by \(i\), and the contraction takes a linear combination of these slices to form a weighted sum, with \(w_i\) the weight on the \(i\)th slice.

From a straightforward application of the chain rule in the coordinates of \(X\), the Hessian
\[
\nabla_w^2 Q(w)=6\,\mathbb{E}\!\big[(w^{\top}X_c)\,X_cX_c^{\top}\big]
=6\,\big[C_3(X_c)[w,\cdot,\cdot]\big],
\]
is exactly this tensor-contraction operation with weight \(w\).

In the latent coordinates, \(X_c=AS_c\) and \(w^{\top}X_c=v^{\top}S_c\) with \(v=A^{\top}w\), so
\[
\mathbb{E}\!\big[(w^{\top}X_c)\,X_cX_c^{\top}\big]
= A\,\mathbb{E}\!\big[(v^{\top}S_c)\,S_cS_c^{\top}\big]A^{\top}
= A\Big(\sum_{i=1}^d v_i\,\mathbb{E}[\,S_{ci}\,S_cS_c^{\top}\,]\Big)A^{\top}.
\]
Thus, contracting the third-order cumulant tensor for \(X\) is, up to a common congruence by \(A\), equivalent to contracting \(C_3(S_c)\) along the vector \(v\).

Because cumulants are multilinear and translation-invariant, under the diagonality of \(C_3(S)\) (Assumption A3$^\prime$), any contracted matrix \(C_3(S_c)[v,\cdot,\cdot]\) is diagonal, with diagonal entries proportional to \(\kappa_3(S_i)\,v_i\). Lifting back to the coordinates of the observables applies a congruence transformation by \(A\) to these diagonal matrices:

\[
\nabla_w^2 Q(w)=A\,\mathrm{diag}\!\big(6\,\kappa_3(S_i)\,v_i\big)\,A^{\top}\equiv A\,D_wA^{\top},
\]
where the weights depend on the contraction vector \(w\) (equivalently, \(v=A^{\top}w\)). This Hessian device gives fine control over the collapse: the diagonal weights inherit the variation in \(w\), so different directions produce different diagonal scalings.

The main theorem then observes that two such contractions—at \(w_1\) and \(w_2\)—yield \(G(w_\ell)\coloneqq A D_{w_\ell}A^{\top}\) with a common congruence. Forming
\[
H=G(w_2)^{-1}G(w_1)=\Lambda^{\top}D_{w_2}^{-1}D_{w_1}(\Lambda^{\top})^{-1}
\]
removes the congruence and produces a matrix similar to a diagonal one; its eigenvectors are exactly the rows of \(\Lambda=A^{-1}\), up to scaling and permutation. Choosing \(w_1\) from a distribution with a density ensures distinct eigenvalues almost surely, guaranteeing uniqueness of the eigendirections.

The above argument is inspired by the joint diagonalization used by \citet{bonhomme2009consistent} and ICA joint-diagonalization variants, but it follows a different and simpler path. Identification here requires only two matrices, and—by Lemma~\ref{lemma:hessian}—both arise solely from the third cumulant (or any fixed order \(h>2\)) via tensor contraction, so no fourth- or second‑order information is required, in contrast to \citet{bonhomme2009consistent}. Consequently, there is no need to assemble large families of cumulant matrices, introduce off‑diagonal minimization objectives, or impose second‑moment diagonality to enable pre‑whitening \citep{cardoso1993blind,hyvarinen2013independent}. Identification is also robust to non‑diagonality in other cumulants, and when researchers do wish to maintain assumptions on those cumulants (e.g., in SVAR applications), the restriction becomes a testable implication in our framework rather than a maintained assumption. Relative to non‑orthogonal JD methods that also avoid whitening, our construction does not rely on fixed‑point iterations \citep{1011195}, tuning‑intensive off‑diagonality criteria \citep{4671095}, or positive‑definiteness constraints tied to the covariance \citep{10.5555/1005332.1016784}. As we will show in Section~6, this simplicity leads to transparent large‑sample analysis for the sample‑analogue estimator.

\subsection{Important Extensions to the Baseline Model}

The baseline identification result relies on two conditions:
(i) every structural error has a nonzero third cumulant, and
(ii) the structural parameter matrix $A$ is square and invertible.
Some popular empirical settings in economics and statistics violate one or
both of these assumptions. Throughout this subsection we continue to work with the centered observables $X_c:=X-\mathbb{E}[X]$ and the cumulant objective
$Q(w)=\kappa_3(w^\top X)=\mathbb{E}[(w^\top X_c)^3]$, so that
\[
G(w):=\nabla_w^{2}Q(w)=A D_w A^\top,
\qquad
D_w:=\operatorname{diag}\!\big(6\,\kappa_3(S_i)\,(A^\top w)_i\big),
\]
and we let $X\in\mathbb{R}^{d_1}$, $S\in\mathbb{R}^{d_2}$, $A\in\mathbb{R}^{d_1\times d_2}$ (so $d_1=d_2=d$ in the baseline case). As shown before, by translation invariance of cumulants (orders $\ge 2$), working with $X_c$ is without loss for identification.

For example, in independent component analysis (ICA) and other
higher‑order‑cumulant methods it is common to allow one (or several) structural
errors to exhibit \emph{zero} skewness, in which case
$G(w)=A D_w A^{\top}$ is singular for all $w$ and ordinary inversion is unavailable.
Moreover, in many economic applications, even when only a subset of structural errors exhibits nonzero skewness, it is still desirable for the model to deliver partial information about the structural parameters (e.g., for the columns of $A$ associated with the skewed shocks).
On the other hand, in factor models the loading matrix $A$ is typically
\emph{tall} ($d_{1}>d_{2}$), creating a similar noninvertibility problem because $\operatorname{rank}(G(w))\le d_2<d_1$ for all $w$.

By contrast, non‑Gaussian VAR applications often satisfy the baseline
invertibility and nonzero skewness conditions, yet practitioners also impose the
additional assumption that structural shocks are uncorrelated.
This second‑moment information is valuable and can be exploited whenever
available via $\mathrm{Var}(X)=A D_2 A^\top$ with diagonal $D_2$.

The remainder of this subsection develops three extensions that accommodate
these practical scenarios:
\begin{enumerate}
  \item Extending the diagonalization argument to non‑square loading matrices
  \item Extending identification to models with non-skewed structural errors.
  \item Leveraging the uncorrelated‑shocks assumption to construct an
        over‑identification test.
\end{enumerate}

\paragraph{Extension 1: Nonsquare $A$ with skewed factors}

A common setting for factor models is when the dimension of the observed variables exceeds the dimension of the latent factors. Concretely, let \(X=A\,S\) with \(A\in\mathbb{R}^{d_1\times d_2}\), \(d_1>d_2\), and \(A\) full column rank. Under A2\(^{\prime}\)–A3\(^{\prime}\) and for choices of \(w_2\) such that \((A^\top w_2)_i\neq 0\) for all \(i\) (e.g., \(w_2=\mathbf{1}\) with \(\sum_q A_{qi}\neq 0\) for all \(i\)), we have for any \(w\):
\[
G(w):=\nabla_w^2 Q(w)=A D_w A^\top,
\qquad
D_w:=\operatorname{diag}\!\big(6\,\kappa_3(S_i)\,(A^\top w)_i\big).
\]
In the tall case (full column rank \(A\) and invertible \(D_{w_2}\)),
\[
G(w_2)^{+}=(A D_{w_2} A^\top)^{+}=(A^+)^{\!\top}\,D_{w_2}^{-1}\,A^{+},
\]
where $A^+$ denotes the Moore–Penrose inverse. Therefore
\[
H \;:=\; G(w_1)\,G(w_2)^{+}
= A\,\big(D_{w_1}D_{w_2}^{-1}\big)\,A^{+}
= A\,\Delta\,A^{+},
\qquad
\Delta=\operatorname{diag}\!\Big(\tfrac{(A^\top w_1)_i}{(A^\top w_2)_i}\Big)_{i=1}^{d_2}.
\]
It follows that
\[
H\,A_{\cdot i}=\delta_i\,A_{\cdot i},\qquad \delta_i=\Delta_{ii},
\]
so the \(d_2\) \emph{right} eigenvectors of \(H\) associated with its nonzero eigenvalues are precisely the columns of \(A\), identified up to column scaling and permutation. The remaining \(d_1-d_2\) eigenvalues equal zero, with eigenvectors spanning \(\operatorname{col}(A)^\perp\). As in the main theorem, if \(w_1\) is drawn from any absolutely continuous distribution on \(\mathbb{R}^{d_1}\) (with \(w_2\) fixed as above), the \(\{\delta_i\}\) are almost surely pairwise distinct.

\paragraph{Extension 2: Identification when there are non‑skewed structural errors}

Suppose some components of the structural error have zero third cumulant. Let \(J:=\{i:\kappa_3(S_i)\neq 0\}\) and assume \(J\neq\varnothing\). Then, for every \(w\), the diagonal matrix \(D_w=\operatorname{diag}\!\big(6\,\kappa_3(S_i)\,(A^\top w)_i\big)\) has zeros on the coordinates \(J^c\), so \(G(w)=A D_w A^\top\) is singular and the ordinary inverse used in Theorem~\ref{thm:eig-decomp} is unavailable.

In this case, what is identifiable depends on the target parameter. If our parameter of interest is the \emph{mixing matrix} \(A\), we lose very little. For \(\ell=1,2\) we can write
\[
G(w_\ell):=A_J D_{\ell,J} A_J^\top,
\qquad
D_{\ell,J}:=\operatorname{diag}(d_{\ell i})_{i\in J},
\quad
d_{\ell i}:=6\,\kappa_3(S_i)\,(A^\top w_\ell)_i,
\]
where \(A_J\) collects the columns of \(A\) indexed by \(J\). Assuming \((A^\top w_2)_i\neq 0\) for all \(i\in J\) (so \(D_{w_2,J}\) is invertible), we are \emph{back in the tall case} of Extension~1 with \(A_J\in\mathbb{R}^{d_1\times |J|}\) full column rank. Hence
\[
H_{\text{mix}} \;:=\; G(w_1)\,G(w_2)^{+}
= A_J\,\big(D_{w_1,J}D_{w_2,J}^{-1}\big)\,A_J^{+},
\]
and the \( |J| \) right eigenvectors corresponding to its nonzero eigenvalues are exactly the columns \(\{A_{\cdot i}: i\in J\}\), identified up to column scaling and permutation (with almost‑sure eigenvalue distinctness for random \(w_1\)).

By contrast, identification of the \emph{structural parameter matrix} \(\Lambda=A^{-1}\) is generally not possible from third‑cumulant information alone when \(|J|<d_2\). The third‑order cumulant family \(\{G(w):w\}\) depends only on \(A_J\) and is invariant to how the complementary columns \(A_{J^c}\) are chosen (as long as \(A=[A_J\;A_{J^c}]\) has full column rank). Different completions \(A_{J^c}\) lead to different inverses \(\Lambda\) but the same \(\{G(w)\}\). Thus, without additional structure (e.g. second‑moment restrictions such as uncorrelated structural errors), the rows of \(\Lambda\) cannot be identified when some components are non‑skewed.

\paragraph{Extension 3: Uncorrelated structural errors and an overidentification test}

In some econometric applications one may wish to impose that the structural errors are \emph{uncorrelated}. Writing $X_c:=X-\mathbb{E}[X]$, this means the second cumulant (covariance) satisfies
\[
\mathrm{Var}(X)\;=\;\mathbb{E}[X_c X_c^\top]\;=\;A\,D_2\,A^\top,
\]
where $D_2$ is diagonal with the structural variances on its diagonal.

\smallskip
\noindent\textit{Identification remark (optional).}
If one were to exploit the second‑moment structure in estimation, one could replace $\bigl(\nabla_w^2 Q(w_2)\bigr)^{-1}$ by $\mathrm{Var}(X)^{-1}$ in Theorem~\ref{thm:eig-decomp}, yielding
\[
H_\Sigma \;:=\; \mathrm{Var}(X)^{-1}\,\nabla_w^2 Q(w_1)
\;=\;\Lambda^\top\bigl(D_2^{-1}D_{w_1}\bigr)(\Lambda^\top)^{-1},
\]
so the eigenvectors are again the rows of $\Lambda$ (up to scaling/permutation).Note that with a diagonal second cumulant, the non-inverse problem does not arise, and hence we can recover all rows of $\Lambda$ corresponding to the distinct eigenvalues.  \emph{For the test below, however, we deliberately estimate the demixing matrix using third‑cumulant information only}, so that the second‑moment implication can be tested rather than imposed.

\smallskip
\noindent\textit{Testable implication (Testing joint diagonality of the second and the third cumulants).}
Let $\widetilde{\Lambda}$ denote the (scaled‑and‑permuted) demixing matrix recovered \emph{solely} from third‑cumulant diagonalization, i.e.
\[
\widetilde{\Lambda} \equiv \text{(rows of eigenvectors of }H:=(\nabla_w^2 Q(w_2))^{-1}\nabla_w^2 Q(w_1)\text{, oriented and normalized)}.
\]
Under A1$^\prime$--A3$^\prime$ plus uncorrelated structural errors,
\[
\widetilde{\Lambda}\,\mathrm{Var}(X)\,\widetilde{\Lambda}^\top
\;=\; P^* D^*\,\Lambda\,\mathrm{Var}(X)\,\Lambda^\top D^* P^{*\,\top}
\;=\; P^* D^*\,D_2\,D^* P^{*\,\top}
\]
is \emph{diagonal}. Equivalently, its \emph{unique} off‑diagonal entries are zero in population, regardless of the unknown row scaling and permutation. Let $\vech_{\mathrm{off}}(\cdot)$ denote the operator that stacks the strict upper‑triangular entries of a symmetric matrix into a vector in $\mathbb{R}^{\binom{d}{2}}$. The overidentifying restrictions are
\[
\vech_{\mathrm{off}}\!\Bigl(\,\widetilde{\Lambda}\,\mathrm{Var}(X)\,\widetilde{\Lambda}^\top\,\Bigr)\;=\;0\in\mathbb{R}^{\binom{d}{2}}.
\]

\smallskip
\noindent\textit{Sample statistic and asymptotics (outline).}
Estimate $\widetilde{\Lambda}$ from third‑order information only by
\[
\hat H \;=\; \bigl(\nabla_w^2 \hat Q(w_2)\bigr)^{-1}\,\nabla_w^2 \hat Q(w_1),
\quad
\hat{\widetilde{\Lambda}}=\text{(oriented, normalized rows of eigenvectors of }\hat H\text{)},
\]
and estimate the covariance by the centered sample covariance
\[
\widehat{\Sigma}_X \;=\; \frac{1}{n}\sum_{i=1}^n (X_i-\bar X)(X_i-\bar X)^\top.
\]
Define the $\binom{d}{2}\times 1$ vector of unique off‑diagonals
\[
\hat r \;=\; \vech_{\mathrm{off}}\!\Bigl(\,\hat{\widetilde{\Lambda}}\,\widehat{\Sigma}_X\,\hat{\widetilde{\Lambda}}^\top\,\Bigr).
\]
If an asymptotically normal estimator $\hat{\widetilde{\Lambda}}$ is available (as established in the Section 6; see Theorem~\ref{thm:main_sample_analogue_theorem}), then, under mild moment conditions, a joint Delta‑method argument implies
\[
\sqrt{n}\,\hat r \;\xrightarrow{d}\; \mathcal{N}(0,\Omega),
\]
for some positive definite $\Omega$ that can be consistently estimated (e.g., via a plug‑in Delta method based on raw moments up to order~6, or via jackknife/bootstrap).
Consequently, the quadratic form
\[
T_n \;=\; n\,\hat r^\top\,\hat\Omega^{-1}\,\hat r
\]
is asymptotically $\chi^2$ with $\binom{d}{2}$ degrees of freedom under the null.
The construction is invariant to row scaling and permutation of $\hat{\widetilde{\Lambda}}$, so no additional normalization or row matching is required. Full details—covariance construction and the proof that $T_n\Rightarrow\chi^2_{\binom{d}{2}}$—are provided in the Appendix. Moreover, we show that under the usual regularity conditions for VAR models (e.g., those adopted in \citet{DAVIS2023180}), the test statistic is asymptotically unaffected by the first-stage OLS estimation used to partial out the lags.

\section{Estimation}
\noindent
Given the constructive nature of the identification proof, a plug-in estimator is a natural choice.
In this section, we establish the asymptotic properties of the simple plug-in estimator under the baseline model by showing that it is a smooth function of sample averages.
Estimation procedures for the model extensions discussed in Section~5 are provided in the appendix.

\subsection{Population Setup}

Recall the population objective function (defined via the third \emph{cumulant})
\[
Q(w) \;=\; \kappa_3\!\bigl(w^\top X\bigr)
\;=\; \mathbb{E}\Bigl[\bigl(w^\top X_c\bigr)^3\Bigr],
\qquad
X_c := X - \mathbb{E}[X].
\]
It is straightforward to see that the entries of its Hessian matrix are a linear combination of the entries of the third-order cumulant tensor; for order three, cumulants coincide with centered moments:
\[
\nabla_w^2 Q(w)
\;=\; 6\,\mathbb{E}\!\big[(w^\top X_c)\,X_c X_c^\top\big]
\;=\; 6\sum_{r=1}^d w_r\,\mathbb{E}\!\big[(X_c)_r\,X_c\,X_c^\top\big].
\]

For asymptotics it is convenient to parameterize everything by \emph{raw} moments and apply the \emph{cumulant map} inside the moment-to-Hessian construction.
Let
\[
M(X)\in\mathbb{R}^{D_3(d)-1}
\quad\text{stack all raw monomials up to total degree 3 (excluding the constant),}
\]
so that \(D_3(d)-1=\binom{d+3}{3}-1\).
For example, when \(d=2\),
\[
M(X)^\top
= \bigl(X_1,\,X_2,\,X_1^2,\,X_1X_2,\,X_2^2,\,X_1^3,\,X_1^2X_2,\,X_1X_2^2,\,X_2^3\bigr).
\]

We then define the \emph{cumulant map} (still denoted \(C\))
\[
C:\mathbb{R}^{D_3(d)-1}\to\mathbb{R}^{d_3},\qquad
c_3 := C(M),
\]
where \(d_3=\binom{d+2}{3}\) is the number of distinct entries in the (symmetric) third-order cumulant tensor.
\emph{Intuitively, \(C\) takes raw moments and, via the standard moment–cumulant identities, subtracts off lower-order products to return third-order cumulants (which, at order three, equal centered third moments).}\footnote{For any fixed order \(r\), the map from raw moments up to order \(r\) to cumulants up to order \(r\) is a multivariate polynomial}

The Hessian \(\nabla_w^2 Q(w)\) is an \emph{affine} function of \(c_3\), so the population matrix
\[
H(M)
\;=\;
\bigl(\nabla_w^2 Q(w_2)\bigr)^{-1}\,\nabla_w^2 Q(w_1)
\;=\;
H\!\big(C(M)\big)
\]
is an analytic function of \(M\) on any neighborhood where \(\nabla_w^2 Q(w_2)\) is nonsingular.
Order the eigenvalues of \(H(M)\) in decreasing order and let \(u_k(M)\) denote the associated normalized eigenvector, oriented by one of the following conventions.

\medskip
\noindent\emph{Orientation A (row-sum rule).} Assume each row of \(\Lambda\) has nonzero sum, i.e.\ \(\mathbf{1}^\top \Lambda_{i\cdot}\neq 0\) for all \(i\). Fix the sign by requiring
\[
\mathbf{1}^\top u_k(M) \;>\; 0 .
\]

\noindent\emph{Orientation B (largest-entry rule).} Alternatively, assume that for each \(k\) the largest absolute coordinate of the \emph{population} eigenvector \(u_k(M)\) is unique. Let \(j_k=\arg\max_j |[u_k(M)]_j|\) and fix the sign by requiring \([u_k(M)]_{j_k}>0\).

\medskip
Either convention yields a well-defined, smooth map
\[
g:\mathbb{R}^{D_3(d)-1}\to\mathbb{R}^d,\qquad g(M)=u_k(M),
\]
in a neighborhood of the population \(M=\mathbb{E}[M(X)]\).

\subsection{Sample Counterparts and Estimator Construction}

For the \emph{point estimator} we use the natural centered sample analogue of the Hessian:
\[
\nabla_w^2 \hat{Q}(w) 
\;=\; 
6\,\frac{1}{n}\sum_{i=1}^n \bigl(w^\top X_{c,i}\bigr)\,X_{c,i} X_{c,i}^\top,
\qquad
X_{c,i}:=X_i-\bar X,\ \bar X:=\tfrac1n\sum_{i=1}^n X_i .
\]
Equivalently, one may compute \(\nabla_w^2 \hat{Q}(w)\) by first forming the vector of sample \emph{raw} moments \(\hat M=\tfrac{1}{n}\sum_{i=1}^n M(X_i)\), then applying the cumulant map \(C\) to obtain the sample third cumulants \(C(\hat M)\), and finally plugging these into the affine formula for \(\nabla_w^2 Q(w)\); the two procedures coincide algebraically because third-order cumulants equal centered third moments.

Define the sample analogue of \(H\) by
\[
\hat H
\;=\;
\bigl(\nabla_w^2 \hat{Q}(w_2)\bigr)^{-1}\,\nabla_w^2 \hat{Q}(w_1).
\]
Let \(\tilde u_k\) denote a (possibly complex-valued) normalized eigenvector of \(\hat H\) associated with its \(k\)th largest eigenvalue, oriented by the \emph{same} rule used in population (Orientation~A or~B, applied after normalization).
We then define our estimator as the \emph{real part}
\[
\hat{u}_k \;=\; \Re\bigl(\tilde u_k\bigr).
\]
This “real-part” safeguard is asymptotically inactive: at the population \(M\), the relevant eigenvalue is real and simple (Theorem~\ref{thm:eig-decomp}), so there exists a neighborhood of \(M\) on which the eigenvector map is real-analytic and real-valued; within that neighborhood \(\Re(\tilde u_k)=\tilde u_k\). Moreover, \(\Re(\cdot)\) is a real-linear projection \(\mathbb{C}^d\to\mathbb{R}^d\), so composing with \(\Re\) preserves differentiability at \(M\).

For asymptotics we view \(u_k\) as a function of the \emph{raw} moment parameter \(M\) and write \(g(M)=u_k(M)\). Our estimator is the plug-in
\[
\hat{u}_k \;=\; g(\hat{M}),
\qquad
\hat M=\frac{1}{n}\sum_{i=1}^n M(X_i).
\]

\begin{lemma}\label{lemma:g_diff}
$g$ is differentiable at $M$ \emph{provided the relevant eigenvalue of $H\!\big(C(M)\big)$ is simple (e.g., when $w_1$ is drawn from a distribution with a density and $(A^\top w_2)_i\neq 0$ for all $i$).}
\end{lemma}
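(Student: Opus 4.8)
The plan is to establish differentiability of $g = u_k \circ (H \circ C)$ at the population moment vector $M$ by composing known smoothness facts, the only delicate link being the eigenvector map. First I would record that $C$ is a polynomial in $M$ (stated in the excerpt: for fixed order $r$, the raw-moment-to-cumulant map is a multivariate polynomial), hence $C^\infty$ everywhere; and that $H(c_3) = (\nabla_w^2 Q(w_2))^{-1}\nabla_w^2 Q(w_1)$ is an affine function of $c_3$ followed by a matrix inversion, so it is real-analytic on the open set where $\nabla_w^2 Q(w_2)$ is nonsingular. Under Assumptions A1$^\prime$--A3$^\prime$ with $(A^\top w_2)_i \neq 0$ for all $i$, Lemma~\ref{lemma:hessian} gives $\nabla_w^2 Q(w_2) = A D_{w_2} A^\top$ with $D_{w_2}$ having nonzero diagonal entries, so it is invertible at the population $M$ and hence on a neighborhood; thus $H \circ C$ is differentiable (indeed real-analytic) at $M$.

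The substantive step is the eigenvector map. Here I would invoke the standard perturbation-theory fact (e.g.\ Kato): if a real matrix $H_0$ has a simple real eigenvalue $\lambda_k$, then on a neighborhood of $H_0$ in the space of real matrices there is a real-analytic choice of the eigenvalue $\lambda_k(\cdot)$ and of an (unnormalized) eigenvector; normalizing to unit length and applying the chosen orientation rule (Orientation A or B) preserves analyticity near $H_0$ because (A) the linear functional $\mathbf{1}^\top u$ is nonzero at the population eigenvector by the row-sum assumption, and (B) the argmax coordinate is locally constant when the largest absolute coordinate is unique, so in each case the sign-fixing multiplier is a smooth (locally constant in sign) function. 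Theorem~\ref{thm:eig-decomp} supplies exactly the hypothesis needed: when $w_1$ is drawn from a distribution with a density and $(A^\top w_2)_i\neq0$ for all $i$, the eigenvalues of $H\big(C(M)\big)$ are almost surely distinct, so in particular the relevant $\lambda_k$ is simple, and it is real because $H$ is similar to the real diagonal matrix $D_{w_2}^{-1}D_{w_1}$. Finally, composing with the real-part projection $\Re:\mathbb{C}^d\to\mathbb{R}^d$ is harmless: it is real-linear, hence smooth, and on the neighborhood where the eigenvector is already real it acts as the identity, so $g = \Re \circ (\text{normalized oriented eigenvector}) \circ H \circ C$ is differentiable at $M$ by the chain rule.

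I would then assemble these pieces: $C$ smooth at $M$; $H$ analytic at $C(M)$ since $\nabla_w^2 Q(w_2)$ is invertible there; the normalized, oriented eigenvector map analytic at $H(C(M))$ by simplicity of $\lambda_k$ (guaranteed generically by Theorem~\ref{thm:eig-decomp}); and $\Re$ smooth — so the composition $g$ is differentiable at $M$, which is the claim. It is worth stating explicitly that the parenthetical ``provided the relevant eigenvalue is simple'' is precisely the non-degeneracy condition under which the argument goes through, and that the cited sufficient condition ($w_1$ with a density, $(A^\top w_2)_i\neq0$) implies it.

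The main obstacle is purely the eigenvector-selection subtlety: eigenvectors are not globally single-valued functions of a matrix (they fail to be even continuous at eigenvalue crossings, and are only defined up to scalar multiple), so one must be careful to work on a neighborhood of the population $H_0$ where the target eigenvalue stays simple and real, and to argue that the chosen normalization-plus-orientation genuinely produces a smooth single-valued branch there. Once simplicity is in hand this is standard, but it is the step that requires the explicit invocation of Theorem~\ref{thm:eig-decomp} and of analytic perturbation theory rather than a one-line appeal to the chain rule.
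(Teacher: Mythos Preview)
Your proposal is correct and follows essentially the same approach as the paper's own proof: both argue that $C$ is polynomial (hence analytic), that the normalized oriented eigenvector depends analytically on the entries of $H$ under simplicity of the relevant eigenvalue (via standard perturbation theory, e.g.\ Kato), and that composing with the real-linear projection $\Re$ preserves differentiability. Your version is more explicit than the paper's in spelling out the intermediate matrix-inversion step for $H$, the invertibility of $\nabla_w^2 Q(w_2)$ at the population, and why each orientation rule yields a smooth single-valued branch, but the skeleton and key ideas are identical.
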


\begin{proof}
$C$ is a polynomial map (moment–cumulant relations at any fixed order are polynomial), hence analytic. When the eigenvalue of interest is simple, the associated normalized, oriented eigenvector depends analytically on the entries of $H$; see, e.g., standard eigenvector perturbation results (See \citet{stewart1990matrix}, \citet{kato1995perturbation}, \citet{371a9e14-3dd5-3f4a-b3ef-88e02cbdde26}). Hence $g=g_{\mathrm{eig}}\circ C$ is differentiable at $M$. Since $\Re$ is real-linear, $\Re\circ g$ is also differentiable at $M$.
\end{proof}

From the Continuous Mapping Theorem and the Delta Method, we obtain the main estimation result:

\begin{theorem}[Asymptotic linearity, variance, and CLT for $\hat u_k$ with $w_2=\mathbf{1}$]
\label{thm:main_sample_analogue_theorem}
Assume A1$^\prime$--A3$^\prime$ and that $\sum_q A_{q i}\neq 0$ for all $i$ (so that $(A^\top \mathbf{1})_i\neq 0$).
Fix $w_2=\mathbf{1}$ and draw $w_1$ from an absolutely continuous distribution on $\mathbb{R}^d$, independent of the sample.
Let $M=\mathbb{E}[M(X)]$ be the vector of raw moments up to total degree 3 (excluding the constant) and $\hat M=\tfrac{1}{n}\sum_{i=1}^n M(X_i)$ its sample analogue.
Assume $\{X_i\}_{i=1}^n$ are i.i.d.\ with $\mathbb{E}\|X\|^{6}<\infty$.
Define
\[
H(M)=\bigl(\nabla_w^2 Q(w_2)\bigr)^{-1}\nabla_w^2 Q(w_1)
=\;H\!\big(C(M)\big),
\qquad
\hat H=\bigl(\nabla_w^2 \hat Q(w_2)\bigr)^{-1}\nabla_w^2 \hat Q(w_1).
\]
Then, with probability one over the draw of $w_1$, $H(M)$ has simple (pairwise distinct) eigenvalues, $u_k$ is well defined, and the map $g:\mathbb{R}^{D_3(d)-1}\to\mathbb{R}^d$, $g(M)=u_k(M)$, is differentiable at $M$.
Writing $G:=Dg(M)\in\mathbb{R}^{d\times (D_3(d)-1)}$ for the Jacobian of $g$ at $M$, we have the asymptotically linear representation
\[
\sqrt{n}\bigl(\hat u_k-u_k\bigr)
\;=\;
G\,\sqrt{n}\,(\hat M-M)\;+\;o_p(1)
\;=\;
\frac{1}{\sqrt{n}}\sum_{i=1}^n \psi_k(X_i)\;+\;o_p(1),
\qquad
\psi_k(x):=G\bigl(M(x)-M\bigr).
\]
Consequently,
\[
\sqrt{n}\bigl(\hat u_k-u_k\bigr)\;\xrightarrow{d}\; \mathcal{N}\!\bigl(0,\;\Sigma_{u_k}\bigr),
\qquad
\Sigma_{u_k}=G\,\Sigma_M\,G^\top,
\quad
\Sigma_M:=\mathrm{Var}\!\big(M(X)\big).
\]
A consistent plug‑in estimator is
\[
\widehat{\Sigma}_{u_k}=\widehat{G}\,\widehat{\Sigma}_M\,\widehat{G}^\top,
\quad
\widehat{\Sigma}_M=\frac{1}{n}\sum_{i=1}^n\bigl(M(X_i)-\hat M\bigr)\bigl(M(X_i)-\hat M\bigr)^\top,
\quad
\widehat{G}=Dg(\hat M).
\]
\end{theorem}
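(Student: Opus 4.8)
The plan is to write the estimator as $\hat u_k = g(\hat M)$, a smooth transformation of the sample raw‑moment vector $\hat M = \tfrac1n\sum_i M(X_i)$, and then combine a multivariate central limit theorem for $\hat M$ with the Delta method. Recall from the population setup that $g = \Re\circ g_{\mathrm{eig}}\circ C$, where $C$ is the (polynomial, hence analytic) moment‑to‑cumulant map, the entries of $\nabla_w^2 Q(w_1)$ and $\nabla_w^2 Q(w_2)$ are affine in $C(M)$, and $g_{\mathrm{eig}}$ sends a matrix to the normalized, orientation‑fixed eigenvector attached to its $k$th largest eigenvalue. First I would record two preliminary facts. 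By Lemma~\ref{lemma:hessian}, $\nabla_w^2 Q(\mathbf 1) = A\,\mathrm{diag}\!\big(6\,\kappa_3(S_i)\sum_q A_{qi}\big)A^\top$, which is nonsingular under A1$^\prime$, A2$^\prime$ and $\sum_q A_{qi}\neq 0$, so $H(M)$ is well defined. By Theorem~\ref{thm:eig-decomp}, for almost every draw of $w_1$ the matrix $H(M)$ has $d$ pairwise distinct real eigenvalues with one‑dimensional eigenspaces; since $w_1$ is independent of the data, we simply condition on that full‑measure event throughout.

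The argument then proceeds in five steps. \emph{(i) CLT for $\hat M$.} The coordinates of $M(X)$ are monomials in $X$ of total degree at most $3$, so $\mathbb E\|M(X)\|^2<\infty$ is equivalent to $\mathbb E\|X\|^6<\infty$, which is assumed; the multivariate central limit theorem gives $\sqrt n(\hat M - M)\xrightarrow{d}\mathcal N(0,\Sigma_M)$ with $\Sigma_M=\mathrm{Var}(M(X))$. \emph{(ii) Differentiability of $g$ at $M$.} This is Lemma~\ref{lemma:g_diff}: because the target eigenvalue of $H(M)$ is simple, analytic eigenvector perturbation theory makes $g_{\mathrm{eig}}$ real‑analytic on a matrix neighborhood $\mathcal U$ of $H(M)$, and composing with the analytic $C$ and the real‑linear $\Re$ keeps $g$ of class $C^1$ on a neighborhood $\mathcal N$ of $M$. \emph{(iii) Delta method.} Applying the Delta method to $g$ at $M$ together with (i) yields $\sqrt n(\hat u_k - u_k) = G\,\sqrt n(\hat M - M) + o_p(1) = \tfrac1{\sqrt n}\sum_i \psi_k(X_i) + o_p(1)$ with $G = Dg(M)$ and $\psi_k(x) = G(M(x)-M)$, a mean‑zero square‑integrable influence function, hence $\sqrt n(\hat u_k - u_k)\xrightarrow{d}\mathcal N(0, G\Sigma_M G^\top)$. \emph{(iv) The real‑part and orientation safeguards are asymptotically inactive.} By the weak law $\hat M\xrightarrow{p}M$, so $\mathbb P(\hat M\in\mathcal N)\to 1$; shrinking $\mathcal N$ if needed so that on it the $k$th eigenvalue of $H(\cdot)$ stays real and simple and the orientation tie‑break (unique largest coordinate for Orientation~B, nonzero row sum for Orientation~A) persists by continuity, the eigenvector $\tilde u_k(\hat H)$ is real and the sign rule unambiguous on $\{\hat M\in\mathcal N\}$, so there $\hat u_k = \Re(\tilde u_k) = g(\hat M)$ exactly; I also use $\nabla_w^2\hat Q(\mathbf 1)\xrightarrow{p}\nabla_w^2 Q(\mathbf 1)$, so $\hat H$ is well defined and lies in $\mathcal U$ with probability tending to one. \emph{(v) Consistency of the variance estimator.} $\widehat\Sigma_M\xrightarrow{p}\Sigma_M$ by the weak law (again using $\mathbb E\|M(X)\|^2<\infty$), and $\widehat G = Dg(\hat M)\xrightarrow{p}Dg(M) = G$ since $Dg$ is continuous near $M$ by (ii); the continuous mapping theorem gives $\widehat\Sigma_{u_k}=\widehat G\,\widehat\Sigma_M\,\widehat G^\top\xrightarrow{p}G\Sigma_M G^\top = \Sigma_{u_k}$.

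The only genuinely delicate step is (iv): the finite‑sample estimator is extracted from an eigenvector of the \emph{non‑symmetric} matrix $\hat H$, which a priori can have complex eigenvalues and eigenvectors and for which the orientation rule need not even be meaningful, whereas the population object is clean. The resolution — localizing to an event of probability tending to one on which $\hat H$ falls in a neighborhood of $H(M)$ where the relevant eigenvalue is real and simple, the orientation tie‑break is well defined, and the eigenvector map is real‑analytic and real‑valued — is exactly what reduces the problem to a textbook Delta‑method argument, and I expect most of the write‑up to be spent making that localization precise (continuity of eigenvalues and eigenvectors of $\hat H$, persistence of the orientation, and real‑valuedness on the neighborhood). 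A minor bookkeeping remark worth stating explicitly is that $\mathbb E\|X\|^6<\infty$ is precisely the moment requirement both for the CLT on $\hat M$ in (i) and for the weak law behind $\widehat\Sigma_M$ in (v).
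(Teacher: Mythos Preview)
Your proposal is correct and follows essentially the same route as the paper: the paper's argument is just Lemma~\ref{lemma:g_diff} (differentiability of $g$ at $M$ under simple eigenvalues) followed by the one-line remark ``From the Continuous Mapping Theorem and the Delta Method,'' together with the preceding discussion that the real-part and orientation safeguards are asymptotically inactive. Your steps (i)--(v) unpack exactly these ingredients, and your extra care in step~(iv)---localizing to a neighborhood where $\hat H$ has a real simple $k$th eigenvalue and the orientation rule is unambiguous---makes explicit what the paper leaves as a qualitative remark, so if anything your write-up is more complete than the paper's own.
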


\begin{remark}[Variance estimation and resampling: jackknife]
Although $g$ is complicated, its Jacobian $Dg(M)$ can be obtained numerically or via an eigenvector‑perturbation formula, and the Delta Method then yields a plug‑in variance.
Because $\hat u_k=g(\hat M)$ is a smooth function of \emph{raw} sample averages and is asymptotically linear (Theorem~\ref{thm:main_sample_analogue_theorem}), resampling methods have theoretical guarantees and provide a practical alternative. In particular, the delete‑1 jackknife is convenient for inference on components $[u_k]_j$ or on linear combinations $c^\top u_k$. In all resamples, apply the same normalization and orientation rule as in the main estimate and keep the eigenvalue ordering consistent.
\end{remark}

\section{Simulation Study}

\subsection{Data Generating Process}
To determine the finite sample performance of our method, we design our simulation experiment following the composite structural error setup described in Section~3. This framework captures all three classical sources of endogeneity: measurement error, simultaneity, and omitted variable bias. The data-generating process is given by:
\begin{equation}
\label{eq:data-gen}
\begin{aligned}
    X_1 &= X_1^* + \sqrt{k}\,\epsilon_1,\\
    X_2 &= X_2^* + \sqrt{k}\,\epsilon_2,\\[6pt]
    \begin{bmatrix}
        1  & 1.5\\
        -0.5 & 1
    \end{bmatrix}
    \begin{bmatrix}
        X_1^*\\
        X_2^*
    \end{bmatrix}
    &=
    \begin{bmatrix}
        s_1\\
        s_2
    \end{bmatrix}
    + \sqrt{\tfrac{k}{3}}
    \begin{bmatrix}
        0.5 & -1 & 1.5\\
        -1  & 1  & -1
    \end{bmatrix}
    \begin{bmatrix}
        e_1\\
        e_2\\
        e_3
    \end{bmatrix}.
\end{aligned}
\end{equation}

Here, \(X_1\) and \(X_2\) denote the observed counterparts of the latent 
regressors \(X_1^{*}\) and \(X_2^{*}\), each contaminated by additive 
measurement errors \(\epsilon_1\) and \(\epsilon_2\). 
We treat these errors as \textit{classical}—mean zero and independent of 
all latent economic variables—while allowing them to be strongly 
correlated with one another. 
Specifically, \((\epsilon_1,\epsilon_2)\) is drawn from a bivariate normal distribution 
with unequal variances (marginal variances \(1\) and \(0.25\)) and a high negative covariance 
\(-0.45\) (implying correlation \(-0.9\)). The factor \(\sqrt{k}\) in Eq.~\eqref{eq:data-gen} scales the measurement‑
error variance, so larger \(k\) values \emph{monotonically} worsen the signal‑to‑noise 
ratio and let us trace how estimator performance deteriorates as data 
quality declines. Importantly, our estimator does not require cross‑equation 
independence between the two measurement errors; it remains valid even 
when \(\operatorname{Cov}(\epsilon_1,\epsilon_2)\neq0\).  We therefore 
impose a large negative covariance to stress‑test this robustness. 
A leading empirical case where such correlation occurs—and where the 
relaxation is essential—is the household‑budget data studied by 
\citet{2a001323-07d6-31c1-910a-be0b0a4eecf7}, in which expenditure and quantity are recorded while 
unit price is derived as their ratio.  A positive recording error in 
quantity mechanically induces an equal‑and‑opposite error in the derived 
price, producing the negative correlation we mirror here (with 
\(X_1\) representing \(\log\)‑quantity and \(X_2\) representing 
\(\log\)‑price).

We generate the latent variables \(\{X_1^*, X_2^*\}\) from a
simultaneous‑equations model with structural parameter
\[
\Lambda=\begin{bmatrix}1 & 1.5\\ -0.5 & 1\end{bmatrix},
\]
where the sign pattern replicates a supply‑and‑demand environment (downward‑sloping demand and upward‑sloping supply). The associated structural errors consist of two parts:
\begin{itemize}
    \item \textbf{Skewed equation shifters.}
          \(s_1,s_2 \stackrel{\text{i.i.d.}}{\sim} \mathrm{Gamma}(1,1)\).
          The Gamma family introduces controlled right‑skewness
          (skewness \(=2\), excess kurtosis \(=6\)) while keeping moment
          formulas simple. These latent equation shifters represent exogenous drivers that would qualify as instruments \emph{if observed}; in our benchmarks, IV‑1 uses \(s_2\) directly (oracle), and IV‑2 uses the diluted instrument \(\tilde z=\sqrt{0.3}\,s_2+\sqrt{0.7}\,z\) with \(z\sim\mathcal N(0,1)\).
    \item \textbf{Symmetric omitted effects.}
          The variables \(e_1,e_2,e_3\) are drawn independently from
          Pearson distributions standardized to mean \(0\), variance \(1\),
          and zero skewness, with \emph{non‑excess} kurtosis
          \(\kappa_j \in \{3,4,5\}\) (so \(\kappa{=}3\) corresponds to Normal tails).
          In implementation we use MATLAB’s \texttt{pearsrnd}\((0,1,0,\kappa_j)\),
          which lets us dial tail thickness without affecting mean, variance, or skewness. 
          In this simulation, these are the omitted common shocks entering via the loading matrix in Eq.~\eqref{eq:data-gen}.
\end{itemize}
We assume mutual independence across the three blocks \((s_1,s_2)\), \((e_1,e_2,e_3)\), and \((\epsilon_1,\epsilon_2)\).
The scaling \(\sqrt{k/3}\) in Eq.~\eqref{eq:data-gen} assigns total variance contribution of order \(k\) to the omitted‑shock component across its three independent sources (each contributes \(\approx k/3\)).

Both the measurement errors \(\epsilon\) and the omitted common effects \(\{e_1, e_2, e_3\}\) act as “noise,” inducing correlation in second and higher‑order moments. By varying \(k\), we directly control the magnitude of this noise and investigate its impact on estimation.

\subsection{Mean Squared Error}
Our simulation focuses on recovering the demand elasticity \(-\Lambda_{12}=-1.5\).
For reporting, we equivalently work with the positive slope parameter \(b_1\equiv \Lambda_{12}=1.5\) and compute MSE for \(\hat b_1\) relative to \(b_1\) (IV slopes are negated post‑estimation and signs are aligned so that \(\Lambda_{11},\Lambda_{22}>0\), \(\Lambda_{12}>0\), \(\Lambda_{21}<0\)).

Alongside our proposed method, we implement three other estimators for comparative benchmarking:
\begin{enumerate}
    \item \textbf{Fast‑ICA:} 
    A widely used blind source separation algorithm that relies on non‑Gaussianity and independence of the structural errors. It has also shown very good performance in econometric applications (see \cite{moneta2022identification}). We include Fast‑ICA to illustrate how ignoring correlation among structural errors can affect results. (Implementation: MATLAB \texttt{fastica} with \texttt{'g'='skew'}; initialization is synchronized across conditions via common random numbers.)
    \item \textbf{IV‑1 (``Oracle'' IV):} 
      Two‑stage least squares that treats the latent shifter \(s_{2}\) as an observed
      instrument for \(X_{2}\).
      \emph{Notice that \(s_{2}\) satisfies the two IV conditions:}
      (i) \textit{relevance}—\(X_{2}^{\ast}=s_{2}+\dots\) in the structural system;  
      (ii) \textit{exclusion}—\(s_{2}\) does not appear in the demand
      equation and is independent of the composite error.  
      Since we demean \(y\equiv X_1\), \(x\equiv X_2\), and \(z\) within each replication, 2SLS reduces to \(\hat\beta=(z'y)/(z'x)\); we report \(\hat b_1=-\hat\beta\).
    \item \textbf{IV‑2 (Imperfect IV):} 
      A second 2SLS estimator that uses the diluted instrument
      \(\tilde z = \sqrt{0.3}\,s_{2} + \sqrt{0.7}\,z\) with
      \(z \sim \mathcal N(0,1)\) independent of all model errors.
      The white‑noise contamination weakens relevance while preserving exclusion.
\end{enumerate}

We consider sample sizes \(n\in\{500,3000,5000\}\) and conduct \(10{,}000\) Monte Carlo replicates for each \((n,k)\) and each estimator. Within each replicate we use a common random numbers design: a single “big” dataset is generated and then reused across \((n,k)\); the Fast‑ICA initialization is re‑seeded identically across \((n,k)\) to isolate design effects rather than algorithmic randomness. Point identification for both our proposed method and Fast‑ICA is achieved via sign restrictions.

We measure estimation accuracy using the mean squared error (MSE) between the true parameter \(b_1=1.5\) and its estimate \(\hat b_1\).
Finally, we vary the noise‑to‑signal parameter
\(k\in\{0,0.1,\dots,0.5\}\).
Increasing \(k\) inflates the measurement‑error component
\((\sqrt{k}\,\epsilon)\) and the omitted‑shock component
\((\sqrt{k/3}\,e)\), thereby increasing the variance of the composite error.
Because both variances and covariances scale proportionally in \(k\), the \emph{cross‑equation correlation} of the composite error remains unchanged for any \(k>0\) (it is undefined at \(k=0\)).

 From Table~1, we see that Fast‑ICA (F‑ICA) performs very well at \(k=0\), the independence case: its MSE is comparable to (and for larger \(n\), slightly below) that of the proposed eigenvector estimator (M1). Once \(k>0\), however, F‑ICA’s MSE grows sharply with \(k\) and, at higher noise levels, even fails to improve with larger \(n\) (e.g., at \(k=0.5\), MSE rises from \(0.916\) at \(n=500\) to \(1.249\) at \(n=5000\)), indicating asymptotic inconsistency under dependent composite shocks. The mechanism is subtle: although the fixed‑point iteration itself optimizes a higher‑moment criterion and does not use second moments directly, identification in Fast‑ICA relies on a \emph{whitening} step that premultiplies the data by the inverse square root of the covariance, yielding \(z=\Sigma_x^{-1/2}x=B s\). This step forces the subsequent search over an \emph{orthogonal} demixing matrix, which is valid only when the latent signals (here, the composite structural errors) are uncorrelated —a condition implied by independence in ICA, but violated in our design for \(k>0\). Consequently, no orthogonal demixer can recover the sources, and the algorithm converges to a biased limit even though its higher‑order objective is correctly specified within the (incorrect) orthogonal constraint set. This failure mode is representative: many higher‑cumulant‑based procedures impose whitening‑plus‑orthogonality as a precondition, and thus inherit a similar bias in this setup.

By contrast, M1’s MSE increases only modestly with \(k\) but declines sharply as \(n\) grows (e.g., from \(0.0113\) to \(0.00123\) at \(k=0\), and from \(0.0524\) to \(0.00439\) at \(k=0.5\)), consistent with the root‑\(n\) rate predicted by our asymptotic theory. Quantitatively, for \(k>0\) F‑ICA is about \(4\times\) to \(284\times\) less accurate than M1 across our \((n,k)\) grid.

Turning to the IV benchmarks, IV‑1 (oracle) uniformly dominates, as expected from using the latent shifter \(s_2\): its MSE is about half of M1’s at \(k=0\) and about one‑fifth to one‑quarter at \(k=0.5\) across \(n\). Importantly, although M1 does not use any instrument, it achieves the \emph{same order of magnitude} MSE as IV‑2, which benefits from partial access to the latent shifter through \(\tilde z=\sqrt{0.3}\,s_2+\sqrt{0.7}\,z\). Across all \((n,k)\), the MSE ratio \(\text{M1}/\text{IV-2}\) lies between \(\approx 0.53\) and \(\approx 1.43\) (median \(\approx 0.83\)), with M1 typically outperforming IV‑2 for low‑to‑moderate noise (\(k\le 0.3\); e.g., \(n=5000,k=0.2\): \(0.00201\) vs \(0.00264\)) and IV‑2 modestly ahead at higher noise (\(k\ge 0.4\); e.g., \(n=5000,k=0.5\): \(0.00439\) vs \(0.00350\)).

Overall, this experiment shows that neglecting dependence in the composite errors is highly consequential for whitening‑based, higher‑order methods, whereas M1 remains reliable across finite samples and a wide range of noise levels—delivering near‑IV performance without any instrument.

\begin{table}[H]
\centering
\scriptsize
\caption{Comparison of Methods at Various Sample Sizes (Vertical Split)}
\label{tab:vertical}

\subcaption*{\textbf{$n = 500$}}
\begin{tabular}{l|cccc}
\toprule
\multicolumn{1}{c|}{$k$}
 & M1 & F-ica & IV-1 & IV-2 \\
\midrule
0 & 1.13$\times 10^{-2}$ & 1.12$\times 10^{-2}$ & 6.18$\times 10^{-3}$ & 2.15$\times 10^{-2}$ \\
0.1 & 1.52$\times 10^{-2}$ & 6.15$\times 10^{-2}$ & 7.05$\times 10^{-3}$ & 2.45$\times 10^{-2}$ \\
0.2 & 2.07$\times 10^{-2}$ & 0.204                 & 7.91$\times 10^{-3}$ & 2.75$\times 10^{-2}$ \\
0.3 & 2.81$\times 10^{-2}$ & 0.430                 & 8.77$\times 10^{-3}$ & 3.05$\times 10^{-2}$ \\
0.4 & 3.85$\times 10^{-2}$ & 0.686                 & 9.63$\times 10^{-3}$ & 3.35$\times 10^{-2}$ \\
0.5 & 5.24$\times 10^{-2}$ & 0.916                 & 1.05$\times 10^{-2}$ & 3.66$\times 10^{-2}$ \\
\bottomrule
\end{tabular}

\vspace{1em}

\subcaption*{\textbf{$n = 3000$}}
\begin{tabular}{l|cccc}
\toprule
\multicolumn{1}{c|}{$k$}
 & M1 & F-ica & IV-1 & IV-2 \\
\midrule
0 & 1.97$\times 10^{-3}$ & 1.67$\times 10^{-3}$ & 1.01$\times 10^{-3}$ & 3.42$\times 10^{-3}$ \\
0.1 & 2.53$\times 10^{-3}$ & 3.08$\times 10^{-2}$ & 1.15$\times 10^{-3}$ & 3.89$\times 10^{-3}$ \\
0.2 & 3.29$\times 10^{-3}$ & 0.140                 & 1.29$\times 10^{-3}$ & 4.37$\times 10^{-3}$ \\
0.3 & 4.30$\times 10^{-3}$ & 0.409                 & 1.43$\times 10^{-3}$ & 4.84$\times 10^{-3}$ \\
0.4 & 5.63$\times 10^{-3}$ & 0.806                 & 1.57$\times 10^{-3}$ & 5.31$\times 10^{-3}$ \\
0.5 & 7.32$\times 10^{-3}$ & 1.22                  & 1.71$\times 10^{-3}$ & 5.78$\times 10^{-3}$ \\
\bottomrule
\end{tabular}

\vspace{1em}

\subcaption*{\textbf{$n = 5000$}}
\begin{tabular}{l|cccc}
\toprule
\multicolumn{1}{c|}{$k$}
 & M1 & F-ica & IV-1 & IV-2 \\
\midrule
0 & 1.23$\times 10^{-3}$ & 9.31$\times 10^{-4}$ & 6.23$\times 10^{-4}$ & 2.05$\times 10^{-3}$ \\
0.1 & 1.56$\times 10^{-3}$ & 2.87$\times 10^{-2}$ & 7.02$\times 10^{-4}$ & 2.35$\times 10^{-3}$ \\
0.2 & 2.01$\times 10^{-3}$ & 0.132                 & 7.84$\times 10^{-4}$ & 2.64$\times 10^{-3}$ \\
0.3 & 2.61$\times 10^{-3}$ & 0.402                 & 8.66$\times 10^{-4}$ & 2.93$\times 10^{-3}$ \\
0.4 & 3.39$\times 10^{-3}$ & 0.815                 & 9.48$\times 10^{-4}$ & 3.22$\times 10^{-3}$ \\
0.5 & 4.39$\times 10^{-3}$ & 1.25                  & 1.03$\times 10^{-3}$ & 3.50$\times 10^{-3}$ \\
\bottomrule
\end{tabular}

\vspace{1em}

\end{table}

\subsection{Inference}
Table~2 reports empirical coverage of two-sided 95\% confidence intervals for the demand elasticity \(b_1=\Lambda_{12}=1.5\) (equivalently, \(-\Lambda_{12}\)). We use the same DGP as in the simulation study with the noise level fixed at \(k=0.5\), across sample sizes \(n\in\{500,3000,5000\}\) and 5{,}000 Monte Carlo replications.

As discussed in Section~6, the variances are estimated using two methods: the leave‑one‑out jackknife and the delta method. Confidence intervals are then constructed by asymptotic normal approximation using the standard normal critical value \(z_{0.975}\).

As shown in Table~2, both procedures achieve coverage close to the nominal 95\% level. The jackknife is slightly closer to nominal at \(n=500\), while for \(n\ge 3000\) the two methods are essentially indistinguishable. With 5{,}000 replications, the Monte Carlo standard error of a 95\% coverage estimate is about 0.003 (i.e., roughly \(\pm 0.6\) percentage points for a 95\% simulation band), so small differences of that magnitude are not substantively meaningful. These findings are consistent with our asymptotic normality derivation and support the use of both plug‑in (delta) and resampling (jackknife) inference in this setting.

\begin{table}[ht!]
\centering
\caption{Coverage rates (\%) for 95\% confidence intervals, $k = 0.5$}
\label{tab:coverage}
\begin{tabular}{lccc}
\toprule
\textbf{Method} & \textbf{$n=500$} & \textbf{$n=3000$} & \textbf{$n=5000$} \\
\midrule
Jackknife  & 94.3 & 95.3 & 95.5 \\
Delta method                       & 90.7 & 94.4 & 94.3 \\
\bottomrule
\end{tabular}
\end{table}

\subsection{Over-identification test: size and power}

We study the finite-sample behavior of the over-identification test from Section~5 (Extension~3) using the \emph{same} data-generating process and simulation protocol as in Section~7.1. In each replication we estimate the (scaled/permuted) structural parameter matrix \(\widehat{\widetilde{\Lambda}}\) \emph{solely} from third-cumulant information via the eigenvector method, fixing \(w_2=\mathbf{1}\) and drawing a single \(w_1\) at random (uniformly from the unit cube, as in Theorem~5.1) once and for all. We then test the null of the joint validity of uncorrelated structural errors and A1$^\prime$–A3$^\prime$ by examining whether \(\widehat{\Theta}=\widehat{\widetilde{\Lambda}}\;\widehat{\Var}(X)\;\widehat{\widetilde{\Lambda}}^{\!\top}\) is diagonal. With \(d=2\) this yields a single over-identifying restriction; we use a Wald statistic with variance computed by \emph{delta‑method} linearization and compare to a \(\chi^2_1\) critical value at \(\alpha=5\%\).

Table~\ref{tab:OID_power_short} reports rejection rates for sample sizes \(n\in\{500,750,1000,5000\}\) and noise scales \(k\in\{0,0.1,\dots,0.5\}\) (larger \(k\) produces stronger departures from joint diagonality). Under the null (\(k=0\)) the test modestly over-rejects at small samples (10.4\% at \(n=500\)) but improves with \(n\) (6.0\% by \(n=5000\)), indicating that the size distortion recedes as the sample grows. Power rises with both the severity of the violation and with \(n\): for instance, at \(k=0.2\) the rejection rate increases from 0.442 at \(n=500\) to 0.813 at \(n=1000\); by \(n=5000\) power is essentially one for all \(k\ge 0.2\) (and 0.985 even at \(k=0.1\)). At very low sample sizes the test still struggles to detect mild violations—performance at \(n=500\) remains moderate—so some VAR applications with monthly data may face limited power against small departures from joint diagonality. That said, the variance regime used here is deliberately challenging; in many empirical settings (e.g., when working with residuals after partialling out lags) the effective noise level is lower, which should improve performance.

\begin{table}[H]
\centering
\caption{Rejection rates (rows: \(n\); columns: \(k\)). Nominal size \(5\%\).}
\label{tab:OID_power_short}
\begin{tabular}{lcccccc}
\toprule
& \multicolumn{6}{c}{Noise scale \(k\)}\\
\cmidrule(lr){2-7}
\(n\) & 0.0 & 0.1 & 0.2 & 0.3 & 0.4 & 0.5\\
\midrule
500  & 0.104 & 0.161 & 0.442 & 0.615 & 0.683 & 0.691 \\
750  & 0.087 & 0.243 & 0.654 & 0.840 & 0.905 & 0.917 \\
1000 & 0.083 & 0.346 & 0.813 & 0.944 & 0.973 & 0.980 \\
5000 & 0.060 & 0.985 & 1.000 & 1.000 & 1.000 & 1.000 \\
\bottomrule
\end{tabular}
\end{table}

\section{Application}

\subsection{Return to Education}

Our first application estimates \emph{returns to schooling} with the proposed method. Data come from the Wooldridge textbook empirical exercise; details are provided in the \texttt{wooldridge} package manual on CRAN.\footnote{\url{https://cran.r-project.org/web/packages/wooldridge/wooldridge.pdf}}

Following \citet{card1993using}, we adopt the linear system
\begin{equation}
\begin{aligned}
lwage_i &= X_i \alpha + educ_i^{obs} \beta + e_{1i},\\
educ_i^{obs} &= X_i \gamma + e_{2i}.
\end{aligned}
\end{equation}
Here, \(lwage_i\) denotes the natural log of wages, \(educ_i^{obs}\) is reported schooling, and \(X_i\) denotes a vector of control variables—such as race, potential experience, and location dummies. Following \citet{card1993using}, there are two main sources of inconsistency in the OLS estimate of \(\beta\). The first is measurement error in reported schooling. Consistent with the original study, we model this error as \emph{classical}: mean‑zero, independent of the true education level and other regressors, and approximately normal (specifically, symmetric). Let \(educ_i^*\) denote true schooling and write the measurement equation
\[
educ_i^{obs} = educ_i^* + \Delta_{educ,i}.
\]

The second—and more challenging—issue is omitted variable bias. As emphasized by \citet{card1993using}, much of the literature attributes this bias to unobserved “ability,” a latent individual trait that is inherently difficult to measure. In this paper, we assume that ability is the primary source of structural error correlation. That is, if ability were observed, all parameters in the model could be estimated consistently. Moreover, we assume that ability, conditional on \(X_i\), is symmetrically distributed. This assumption is consistent with precedents in applied econometrics (e.g., \citet{ee98fe21-acdf-39c2-ada6-7518abe3ad0b},\citet{doi:10.1086/504455} ) and is supported by empirical findings from psychometric studies (e.g., \cite{jensen1998g}). 

In summary, we use the following triangular linear structural model, which mirrors the composite structural error framework discussed in Section 3 and verified by simulation in Section 7:
\[
\begin{aligned}
lwage_i &= X_i\alpha + educ_i^{obs}\beta + e_{1i},\\
educ_i^{obs} &= X_i\gamma + e_{2i},\\
e_{1i} &= ability_i\theta_1 - \Delta_{educ,i}\,\beta + u_i,\\
e_{2i} &= ability_i\theta_2 + \Delta_{educ,i} + v_i,
\end{aligned}
\]
where \(u_i\) and \(v_i\) are the structural errors after accounting for \(ability_i\). These errors capture factors with no cross‑equation effects independent of ability and measurement error. The term $v_i$ collects idiosyncratic shocks that affect schooling but not wages 
except through schooling (given $X_i$); in IV setups these shocks are the source of 
exogenous variation typically targeted by instruments. We assume \(u_i\) and \(v_i\) are mutually independent and have non-zero skewness. \(\Delta_{educ,i}\) denotes measurement error in education, assumed to be normally distributed and independent of the true level of education. We also assume \(ability_i\) is the sole source of omitted‑variable bias and is independent of \(u_i\) and \(v_i\) (conditional on \(X_i\)). Finally, once the effects of \(X_i\) are accounted for, the distribution of \(ability_i\) is assumed symmetric.

As shown in Table 2 of \citet{card2001estimating}, previous estimates of the return to schooling range from $0.0245$ to $0.36$. \cite{card1993using} IV estimate based on distance to college is $0.132$. Using our proposed estimator, we obtain a point estimate of $0.0987$ with a 95\% jackknife confidence interval of \([0.0358,\,0.1500]\). This interval contains the majority of the estimates reviewed by \cite{card2001estimating} in this setup, and our point estimate is close to the distance‑to‑college IV benchmark, lending credibility to our approach.

\subsection{Uncertainty and Business Cycle}

Identification through higher moments or cumulant restrictions has recently become more popular in the macroeconometrics literature, particularly within the vector autoregression (VAR) framework. As highlighted in the introduction, in this literature the assumption of uncorrelated structural shocks is not only relatively uncontroversial but also often desirable for interpretable impulse responses. Our results show that the commonly imposed model—characterized by a diagonal second cumulant and diagonal third or fourth cumulants of the structural shocks—is \emph{overidentified} and therefore testable. One reason structural shocks can appear dependent in practice is that the specified VAR system is simply too small, so the included variables do not support a linear causal model. The empirical exercise below illustrates how the test proposed in Section~5 can detect such misspecification and how the resulting evidence can answer a substantive economic question. 

Uncertainty typically rises in economic downturns. \citet{ludvigson2021uncertainty} use a VAR to ask whether uncertainty helps cause recessions or is instead an endogenous response. A central conclusion of the paper is that the \emph{type} of uncertainty matters: innovations to financial uncertainty (\(UF\)) behave more like an exogenous driver, raising macro uncertainty (\(MU\)) and lowering industrial production (\(IP\)), whereas downturns in \(IP\) have limited feedback to \(UF\).\footnote{See \citet{ludvigson2021uncertainty} for details on the uncertainty measures and VAR specification.} In their analysis, different variables were considered to capture real (macro) uncertainty. The main specification used \(MU\) to obtain the aforementioned result, while the economic policy uncertainty (EPU) index constructed by \citet{10.1093/qje/qjw024} was employed as a robustness check. Interestingly, in that specification \(UF\) exhibited a statistically significant contemporaneous response to \(IP\). However, the EPU sample there was much shorter (358 observations).

Importantly, the view that \(UF\) acts as a driver of the business cycle is not isolated. \citet{https://doi.org/10.1002/jae.2672} use a heteroskedasticity‑based identification method and conclude that financial uncertainty does not respond to shocks in real activity nor to shocks in macro uncertainty. \citet{DAVIS2023180} revisit the same question using an ICA‑based approach: they posit that the reduced‑form residuals are linear combinations of three independent non‑Gaussian shocks and develop a permutation‑based independence test under a proposed causal ordering. Within the \((MU,UF,IP)\) system, the lower‑triangular orderings that place \(UF\) on top (most exogenous) are not rejected, providing further support for \citet{ludvigson2021uncertainty}. When \(MU\) is replaced by EPU, however, independence is rejected for all triangular orderings. This may reflect data limitations (the shorter EPU sample used in those papers), or it may indicate that a strict triangular contemporaneous structure does not hold (EPU also responds contemporaneously to \(IP\)). Building on the statistical framework of \citet{DAVIS2023180}, we therefore re‑estimate the EPU specification using updated EPU data that cover the full \citet{ludvigson2021uncertainty} sample, 1960:07–2015:04. Our goal is to assess the claim that \(UF\) is approximately exogenous by evaluating two conditions: (1) shocks to \(UF\) act as the exogenous driver; and (2) the shocks are uncorrelated and satisfy A2$^\prime$–A3$^\prime$.

We adopt \citet{DAVIS2023180}'s VAR set‑up
\[
Y_t \;=\; A_1 Y_{t-1} + \cdots + A_p Y_{t-p} + e_t,
\qquad
e_t \;=\; B u_t,
\]
with \(Y_t'=(UF_t,EPU_t,IP_t)\) and \(u_t\) a three‑dimensional vector of shocks. We assume \(u_t\) are i.i.d.\ across \(t\), and we set the lag length to \(p=6\) as in \citet{ludvigson2021uncertainty}. The structural object of interest is \(B\). Our test, however, differs from \citet{DAVIS2023180} (and, e.g., \citet{10.1257/pandp.20221047}): we test only the \emph{minimal} restriction set that suffices for identification here—diagonality of a single higher‑order cumulant (third order) together with uncorrelated structural shocks. We implement the joint‑diagonality over‑identification test on the estimated reduced‑form residuals. We assume the lag order is correctly specified so that the Delta‑method large‑sample approximation for our Wald statistic (Section~5) is valid.

Ideally, one would implement a joint three‑equation test and, if not rejected, recover the full \(B\) from third‑order information and then inspect any implied lower‑triangular pattern. Given that the sample size in this exercise is still relatively moderate to estimate higher cumulants accurately, we instead focus on three testable implications that follow if \(UF\) is effectively “on top” of the contemporaneous causal order. The idea of the test mirrors \citet{DAVIS2023180}, which effectively tests the joint plausibility of a given rotation combined with independent shocks. 

Concretely, if the first row of \(B\) loads only on \(u_{UF}\) (we do not require full triangularity beyond this) and the shocks \(u_t\) satisfy A2$^\prime$–A3$^\prime$ \emph{and} uncorrelatedness, then the \(2\times2\) subsystems \((e_{UF},e_{EPU})\) and \((e_{UF},e_{IP})\) can be written as linear combinations of shocks that satisfy joint diagonality of the third and second cumulants. To see this,
\[
\begin{bmatrix}
e_{UF}\\[2pt]
e_{EPU}\\[2pt]
e_{IP}
\end{bmatrix}
=
\begin{bmatrix}
1 & 0 & 0\\
b_{21} & 1 & b_{23}\\
b_{31} & b_{32} & 1
\end{bmatrix}
\begin{bmatrix}
u_{UF}\\[2pt]
u_{EPU}\\[2pt]
u_{IP}
\end{bmatrix},
\]
so \(e_{UF}=u_{UF}\), \(e_{EPU}=b_{21}u_{UF}+u_{EPU}+b_{23}u_{IP}\), and \(e_{IP}=b_{31}u_{UF}+b_{32}u_{EPU}+u_{IP}\). Let \(u_{\mathrm{comb}1}:=u_{EPU}+b_{23}u_{IP}\) and \(u_{\mathrm{comb}2}:=b_{32}u_{EPU}+u_{IP}\). Then
\[
T_{UF,EPU}
\begin{bmatrix}
e_{UF}\\[2pt]
e_{EPU}
\end{bmatrix}
=
\begin{bmatrix}
u_{UF}\\[2pt]
u_{\mathrm{comb}1}
\end{bmatrix},
\qquad
T_{UF,EPU} :=
\begin{bmatrix} 1 & 0 \\ -\,b_{21} & 1 \end{bmatrix},
\]
and analogously
\(
T_{UF,IP}\begin{bmatrix}e_{UF}\\ e_{IP}\end{bmatrix}
=
\begin{bmatrix}u_{UF}\\ u_{\mathrm{comb}2}\end{bmatrix}
\)
with
\(
T_{UF,IP}=\begin{bmatrix}1&0\\ -\,b_{31}&1\end{bmatrix}.
\)
If \(u_t\) satisfies A2$^\prime$–A3$^\prime$ and uncorrelatedness, the pairs \((u_{UF},u_{\mathrm{comb}1})\) and \((u_{UF},u_{\mathrm{comb}2})\) retain diagonal third cumulants and diagonal covariance, so each \(2\times2\) subsystem meets the joint diagonality restriction under the null.

By contrast, the subsystem \((e_{EPU},e_{IP})\) generically mixes three shocks in a two‑dimensional space. Absent knife‑edge cancellations (e.g., \(b_{21}=b_{31}=0\) or vanishing third cumulants in just the right combination), \((e_{EPU},e_{IP})\) will violate joint diagonality of the second and third cumulants. This is why we can test the joint validity of the assumptions by focusing on these three two‑equation tests.

Applying this procedure, the over‑identification test rejects joint diagonality for \((e_{EPU},e_{IP})\) at the 5\% level (Delta‑method \(p<0.01\)), while it does not reject for the pairs \((e_{UF},e_{EPU})\) and \((e_{UF},e_{IP})\) at the 5\% level. On the surface, this pattern is what the “\(UF\) on top’’ hypothesis predicts: pairs that include \(UF\) behave like two‑source systems and pass, whereas \((EPU,IP)\) fails the minimal over‑identifying restriction. Looking more closely, however, \((e_{UF},e_{IP})\) is \emph{borderline} at the 5\% level (Delta‑method \(p=0.055\)), which is suggestive of modest tension with exact joint diagonality—consistent with a small contemporaneous channel between uncertainty and policy uncertainty or with specification noise. By contrast, the large \(p\)‑value for \((e_{UF},e_{EPU})\) (\(p=0.65\)) indicates no detectable deviation from the two‑source benchmark in that pair. Taken together, these results are broadly in line with \citet{https://doi.org/10.1002/jae.2672}, in the sense that uncertainty shocks appear to help drive the cycle, while also indicating that \(UF\) is not perfectly exogenous in the EPU specification. Moreover, it shows that a strict triangular contemporaneous ordering may be too rigid for this model, as \citet{ludvigson2021uncertainty} suggested. 

\section{Acknowledgments}

I would like to express my sincere gratitude to my supervisors, Aureo de Paula and Andrei Zeleneev, for their continuous support, guidance, and encouragement throughout this work. I am also indebted to Tim Christensen, Ben Deaner, Raffaella Giacomini, Dennis Kristensen, and Daniel Lewis for their invaluable feedback and advice. I am grateful to Yanziyi Zhang and Chen-Wei Hsiang for many helpful discussions. Any remaining errors are solely my own responsibility.

\section{Declaration of competing interest}
The author declares that they have no known competing financial interests or personal relationships that could have appeared to influence the work reported in this paper.

\appendix
\noindent{\textbf{\Large{Appendix}}}

    \section{Theorem 5.1 for general \(\boldsymbol{h>2}\) }
     \begin{proof}
Let $v=A^\top w$ so that, by additivity and homogeneity of cumulants,
\[
Q(w)=\kappa_h(w^\top X)=\kappa_h(v^\top S)=\sum_{i=1}^d \kappa_h(S_i)\,v_i^{\,h}.
\]
Differentiating w.r.t.\ $v$ gives
\[
\frac{\partial^2 Q}{\partial v_i\partial v_j}
= h(h-1)\,\kappa_h(S_i)\,v_i^{\,h-2}\,\mathbf{1}\{i=j\},
\]
so $\nabla_v^2 Q(v)=\mathrm{diag}\!\big(h(h-1)\kappa_h(S_i)\,v_i^{\,h-2}\big)$ and, by the chain rule,
\[
\nabla_w^2 Q(w)=A\,\nabla_v^2 Q(v)\,A^\top \;=\; A\,D_w\,A^\top, 
\quad D_w:=\mathrm{diag}\!\big(h(h-1)\kappa_h(S_i)\,v_i^{\,h-2}\big).
\]

Fix $w_2$ such that $v_{2,i}=(A^\top w_2)_i\neq 0$ for all $i$ (e.g., $w_2=\mathbf{1}$ with $\sum_q A_{qi}\neq 0$ for each $i$). Then
\[
H \;:=\; \big(\nabla_w^2 Q(w_2)\big)^{-1}\,\nabla_w^2 Q(w_1)
= (A^\top)^{-1}\,D_{w_2}^{-1}\,D_{w_1}\,A^\top
= \Lambda^\top \,\mathrm{diag}(\lambda_i)\,(\Lambda^\top)^{-1},
\]
where
\[
\lambda_i = \frac{h(h-1)\kappa_h(S_i)\,v_{1,i}^{\,h-2}}{h(h-1)\kappa_h(S_i)\,v_{2,i}^{\,h-2}}
= \Big(\frac{v_{1,i}}{v_{2,i}}\Big)^{h-2}, 
\qquad v_{\ell,i}=(A^\top w_\ell)_i \;( \ell=1,2).
\]
Hence the eigenvectors of $H$ are the columns of $\Lambda^\top$ (equivalently, the rows of $\Lambda$).

It remains to show that, for $w_1$ drawn from any absolutely continuous distribution (e.g., uniform on $[0,1]^d$), the eigenvalues $\{\lambda_i\}$ are almost surely distinct. Set $k:=h-2\ge 1$ and fix $i\neq j$.
Equality $\lambda_i=\lambda_j$ is equivalent to
\[
\Big(\frac{v_{1,i}}{v_{2,i}}\Big)^{k}=\Big(\frac{v_{1,j}}{v_{2,j}}\Big)^{k}
\;\;\Longleftrightarrow\;\;
v_{2,j}^{k}\,\big(w_1^\top A_{\cdot i}\big)^{k} \;=\; v_{2,i}^{k}\,\big(w_1^\top A_{\cdot j}\big)^{k}.
\]

\paragraph{Case 1: $k$ odd.}
Taking $k$-th roots preserves sign:
\[
v_{2,j}\,(w_1^\top A_{\cdot i}) \;=\; v_{2,i}\,(w_1^\top A_{\cdot j})
\;\;\Longleftrightarrow\;\;
w_1^\top\big(v_{2,j}A_{\cdot i}-v_{2,i}A_{\cdot j}\big)=0.
\]
Because $A_{\cdot i}$ and $A_{\cdot j}$ are linearly independent and $v_{2,i},v_{2,j}\neq 0$, the vector
$v_{2,j}A_{\cdot i}-v_{2,i}A_{\cdot j}\neq 0$, so this is a $(d-1)$-dimensional hyperplane (Lebesgue measure zero).

\paragraph{Case 2: $k$ even.}
Equality of $k$-th powers is equality of absolute values:
\[
\big|v_{2,j}\,(w_1^\top A_{\cdot i})\big| \;=\; \big|v_{2,i}\,(w_1^\top A_{\cdot j})\big|
\;\;\Longleftrightarrow\;\;
\big(v_{2,j}\,w_1^\top A_{\cdot i}\big)^2 = \big(v_{2,i}\,w_1^\top A_{\cdot j}\big)^2.
\]
Factoring gives
\[
\big[w_1^\top\big(v_{2,j}A_{\cdot i}-v_{2,i}A_{\cdot j}\big)\big]\;
\big[w_1^\top\big(v_{2,j}A_{\cdot i}+v_{2,i}A_{\cdot j}\big)\big]\;=\;0,
\]
so $\lambda_i=\lambda_j$ iff $w_1$ lies in the \emph{union} of the two hyperplanes
$w_1^\top(v_{2,j}A_{\cdot i}-v_{2,i}A_{\cdot j})=0$ or
$w_1^\top(v_{2,j}A_{\cdot i}+v_{2,i}A_{\cdot j})=0$.
Each is a $(d-1)$-dimensional hyperplane (nonzero normal because $A_{\cdot i},A_{\cdot j}$ are independent and $v_{2,i},v_{2,j}\neq 0$), hence measure zero; a finite union is also measure zero.

\medskip
Taking a finite union over all $i\neq j$, the event that any two eigenvalues coincide has Lebesgue measure zero. Therefore, for $w_1$ drawn from an absolutely continuous distribution on $\mathbb{R}^d$ (e.g., uniform on the unit cube), the eigenvalues are almost surely pairwise distinct, and the corresponding eigendirections are unique up to scaling and permutation. This completes the proof.
\end{proof}

\subsection{Estimation for Extension 1 and 2 (tall $A$, $d_1>d_2$)}

When $A\in\mathbb{R}^{d_1\times d_2}$ is full column rank and A2$^\prime$–A3$^\prime$ hold (skewed factors), identification uses
\[
H \;=\; G(w_1)\,G(w_2)^{+}
\;=\; A\,\big(D_{w_1}D_{w_2}^{-1}\big)\,A^{+}
\;=\; A\,\Delta\,A^{+},
\qquad
\Delta=\operatorname{diag}\!\Big(\tfrac{(A^\top w_1)_i}{(A^\top w_2)_i}\Big)_{i=1}^{d_2},
\]
so the $d_2$ \emph{right} eigenvectors of $H$ corresponding to its nonzero eigenvalues are precisely the columns of $A$ (up to column scaling and permutation).

\medskip
\noindent\textbf{Implementation.}
Compute $\hat G(w_\ell)=\nabla_w^2\hat Q(w_\ell)$ for $\ell=1,2$ from centered data (equivalently, apply the raw–to–centered polynomial map inside the construction).
Choose $w_2$ so that $(A^\top w_2)_i\neq 0$ for all relevant $i$ (e.g., $w_2=\mathbf{1}$ with nonzero column sums).
Stabilize $\hat G(w_2)^{+}$ via SVD (i.e, take the best rank \(d_2\) approximate) and form
\[
\hat H \;=\; \hat G(w_1)\,\hat G(w_2)^{+}\in\mathbb{R}^{d_1\times d_1}.
\]
Let $\{\hat v_j\}$ be the \emph{right} eigenvectors associated with the largest nonzero eigenvalues of $\hat H$ (if $d_2$ is known, take exactly $d_2$ of them; otherwise select by a nonzero–spectrum threshold which shrinks slower than \(\sqrt{n}\)).
Then $\{\hat v_j\}$ estimate the columns of $A$ up to scaling and permutation; apply the same column orientation / normalization rule as in the square case.
Inference follows by the same Delta–method arguments as in Section~5, since the stabilized pseudoinverse is smooth on a rank–constant neighborhood and eigenvalue separation holds almost surely for random $w_1$.

\section{Overidentification test under uncorrelated structural errors}

This subsection derives the asymptotic $\chi^2$ distribution of the overidentification test described in Section~4 (Extension~3). Throughout, we assume A1$^\prime$--A3$^\prime$, the eigenvalue distinctness condition as in Theorem~\ref{thm:eig-decomp} (e.g. satisfied almost surely when \ $w_2=\mathbf{1}$ with $\sum_q A_{qi}\neq 0$ for all $i$ and $w_1$ drawn from an absolutely continuous distribution), and the same eigenvector orientation convention (row‑sum or unique largest‑entry) used in Section~5.

\paragraph{Setup and notation.}
Recall that $X_c:=X-\mathbb{E}[X]$, $\Sigma:=\mathrm{Var}(X)=\mathbb{E}[X_cX_c^\top]$. Let
\[
H(M)\;=\;\bigl(\nabla_w^2 Q(w_2)\bigr)^{-1}\,\nabla_w^2 Q(w_1)\,,
\quad
Q(w)=\kappa_3(w^\top X)=\mathbb{E}\!\big[(w^\top X_c)^3\big],
\]
where $M=\mathbb{E}[M(X)]$ stacks the \emph{raw} monomials of total degree $1$–$3$ (cf.\ Section~5). Denote by
\[
\widetilde{\Lambda}(M)\in\mathbb{R}^{d\times d}
\]
the \emph{scaled and permuted} demixing matrix recovered solely from third‑cumulant information, i.e.\ the matrix whose rows are the (oriented, normalized) eigenvectors of $H(M)$. The orientation is fixed by the same rule as in Section~5, ensuring differentiability at the population $M$.

Let $\vech_{\mathrm{off}}$ denote the operator that stacks the strict upper‑triangular entries of a symmetric matrix into a vector in $\mathbb{R}^{\binom{d}{2}}$.

\paragraph{Population restrictions and sample counterpart.}
Under uncorrelated structural errors, $\Sigma=A D_2 A^\top$ with diagonal $D_2$, hence
\[
\widetilde{\Lambda}(M)\,\Sigma\,\widetilde{\Lambda}(M)^\top
= P^* D^* \Lambda\,\Sigma\,\Lambda^\top D^* P^{*\,\top}
= P^* D^* D_2 D^* P^{*\,\top}
\quad\text{is diagonal.}
\]
Equivalently,
\[
r_0 \;:=\; \vech_{\mathrm{off}}\!\Big(\,\widetilde{\Lambda}(M)\,\Sigma\,\widetilde{\Lambda}(M)^\top\,\Big) \;=\; 0 \in \mathbb{R}^{\binom{d}{2}}.
\]

In samples, define the raw‑moment average $\hat M=\tfrac{1}{n}\sum_{i=1}^n M(X_i)$ and the centered sample covariance
\[
\widehat{\Sigma}_X=\frac{1}{n}\sum_{i=1}^n (X_i-\bar X)(X_i-\bar X)^\top,\qquad \bar X=\tfrac1n\sum_{i=1}^n X_i.
\]
Construct $\hat{\widetilde{\Lambda}}=\widetilde{\Lambda}(\hat M)$ from third cumulants only (as in Section~5), and set
\[
\hat r \;=\; \vech_{\mathrm{off}}\!\Big(\,\hat{\widetilde{\Lambda}}\,\widehat{\Sigma}_X\,\hat{\widetilde{\Lambda}}^\top\,\Big)\in\mathbb{R}^{\binom{d}{2}}.
\]

\paragraph{Joint moment vector and smooth map.}
Let $Z(X)\in\mathbb{R}^{p}$ stack all raw monomials of total degree $1$–$3$ (the vector used to build $\hat M$). Notice, this includes the degree‑$1$ and degree‑$2$ raw monomials needed to build $\Sigma$ from raw moments (i.e.\ means and second moments). Then $\theta:=\mathbb{E}[Z(X)]$ collects all raw moments required to compute both $\widetilde{\Lambda}$ and $\Sigma$; write $\hat\theta=\tfrac1n\sum_{i=1}^n Z(X_i)$.

Define the composition
\[
R(\theta)
\;:=\;
\vech_{\mathrm{off}}\!\Big(\,\widetilde{\Lambda}\big(C_3(\theta)\big)\;\Sigma(\theta)\;\widetilde{\Lambda}\big(C_3(\theta)\big)^\top\,\Big),
\]
where $C_3(\theta)$ is the polynomial map from raw moments (components of $\theta$) to centered third moments (the third cumulant) (Section~5), $\Sigma(\theta)$ is the polynomial map from raw moments to the covariance $\Sigma$, and $\widetilde{\Lambda}$ is the (oriented) eigenvector map applied to $H\big(C_3(\theta)\big)$. Then $r_0=R(\theta)$ and $\hat r=R(\hat\theta)$.

\begin{assumption}[Sampling, separation, and nonredundancy]\label{ass:appendix_joint}
The following hold:
\begin{enumerate}[(i)]
    \item \textbf{Sampling and moments.} The sample $\{X_i\}_{i=1}^n$ is i.i.d.\ with $\mathbb{E}\|X\|^{6}<\infty$.
    \item \textbf{Eigenvalue separation and orientation.} The eigenvalues of $H\!\big(C_3(\theta)\big)$ are simple and the eigenvectors are oriented by the same rule as in Section~5 (e.g., row‑sum $>0$ or unique largest entry), so that the eigenvector map is differentiable at the population. Concretely, as in Theorem~\ref{thm:eig-decomp}, take $w_2=\mathbf{1}$ with $\sum_q A_{qi}\neq 0$ for all $i$ and draw $w_1$ from an absolutely continuous distribution; then simplicity holds almost surely.
    \item \textbf{Nonredundancy of overidentifying restrictions.} Let $R(\theta)=\vech_{\mathrm{off}}\!\big(\widetilde{\Lambda}(C_3(\theta))\,\Sigma(\theta)\,\widetilde{\Lambda}(C_3(\theta))^\top\big)$. Its Jacobian $J:=DR(\theta)\in\mathbb{R}^{\binom{d}{2}\times p}$ has full row rank $\binom{d}{2}$ at the population $\theta$.
\end{enumerate}
\end{assumption}

\noindent\textit{Comments.} Items (i) and (ii) coincide with the standing conditions used in Section~5 (see Lemma~\ref{lemma:g_diff} and Theorem~\ref{thm:main_sample_analogue_theorem}); item (iii) is the only \emph{new} condition here. It rules out local redundancy among the $\binom{d}{2}$ off‑diagonal restrictions so that the asymptotic covariance of $\sqrt{n}\,\hat r$ is nonsingular, yielding a $\chi^2_{\binom{d}{2}}$ limit for the Wald statistic.

Assumption~\ref{ass:appendix_joint}(i) delivers a multivariate CLT for $\hat\theta$ since $Z(X)$ contains monomials up to degree~3; Assumption~\ref{ass:appendix_joint}(ii) ensures differentiability of the eigenvector map as shown in Section~5; Assumption~\ref{ass:appendix_joint}(iii) is a standard local‑identification condition for the set of overidentifying restrictions (it holds generically).

\begin{lemma}[Differentiability of $R$]\label{lem:R_diff}
Under Assumption~\ref{ass:appendix_joint}(ii), the map
\[
R:\ \theta\mapsto\vech_{\mathrm{off}}\!\big(\,\widetilde{\Lambda}(C_3(\theta))\,\Sigma(\theta)\,\widetilde{\Lambda}(C_3(\theta))^\top\,\big)
\]
is continuously differentiable at $\theta$.
\end{lemma}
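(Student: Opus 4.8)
The plan is to establish differentiability of $R$ by exhibiting it as a composition of maps, each of which is differentiable (indeed $C^1$) at the relevant point, and then invoke the chain rule. Concretely, $R = \vech_{\mathrm{off}}\circ \Phi \circ (\widetilde{\Lambda}(C_3(\cdot)),\Sigma(\cdot))$, where $\Phi(L,\Sigma) := L\,\Sigma\,L^\top$ is the congruence map. I would proceed link by link, in the order the composition is evaluated.

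First I would note that $C_3:\theta\mapsto(\text{centered third moments})$ and $\Sigma:\theta\mapsto(\text{covariance})$ are polynomial maps in the raw moments (the moment–cumulant identities at any fixed order are polynomial, as recorded in Section~5), hence $C^\infty$ everywhere; in particular they are $C^1$ at the population $\theta$. Next, the Hessian map $c_3\mapsto H = (\nabla_w^2 Q(w_2))^{-1}\nabla_w^2 Q(w_1)$ is an affine function of $c_3$ followed by a matrix inversion; since Assumption~\ref{ass:appendix_joint}(ii) (via the $w_2=\mathbf1$, nonzero-column-sum condition of Theorem~\ref{thm:eig-decomp}) guarantees $\nabla_w^2 Q(w_2)$ is nonsingular at the population, inversion is analytic on a neighborhood, so $\theta\mapsto H(C_3(\theta))$ is $C^1$ at $\theta$. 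Then the oriented, normalized eigenvector map: by Assumption~\ref{ass:appendix_joint}(ii) the eigenvalues of $H$ at the population are simple, so standard eigenvector perturbation theory (Kato, Stewart–Sun, as cited for Lemma~\ref{lemma:g_diff}) gives that each normalized eigenvector depends analytically on the entries of $H$ on a neighborhood, and the orientation rule (row-sum $>0$ or unique-largest-entry, both of which are strict inequalities at the population) is locally just a sign choice, preserving differentiability. Stacking the rows, $H\mapsto\widetilde{\Lambda}(H)$ is $C^1$ near the population $H$; composing, $\theta\mapsto\widetilde{\Lambda}(C_3(\theta))$ is $C^1$ at $\theta$. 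Finally, $\Phi(L,\Sigma)=L\Sigma L^\top$ is a polynomial (bilinear–quadratic) map, hence $C^\infty$, and $\vech_{\mathrm{off}}$ is linear. The chain rule then assembles $R$ as $C^1$ at $\theta$, which is exactly the claim.

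The only delicate link is the eigenvector map, and that is precisely where I would concentrate the argument: one must use that at the \emph{population} $\theta$ the eigenvalue attached to each recovered row of $\widetilde{\Lambda}$ is simple (guaranteed almost surely over the draw of $w_1$ by the measure-zero hyperplane argument in Theorem~\ref{thm:eig-decomp}), so that each eigenprojection $P_i(H)$ is analytic in $H$ near $H(\theta)$; the normalized eigenvector is then $P_i(H)e/\|P_i(H)e\|$ for a fixed generic vector $e$, and the denominator is nonzero on a neighborhood, so the quotient is $C^1$. The orientation sign is locally constant because the defining inequality ($\mathbf1^\top u_i>0$, or $[u_i]_{j_i}>0$ with $j_i$ the unique argmax) is strict at the population and the eigenvector is continuous, so it does not break smoothness. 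Everything else—polynomiality of $C_3$, $\Sigma$, and $\Phi$, nonsingularity of $\nabla_w^2 Q(w_2)$ at the population, linearity of $\vech_{\mathrm{off}}$—is routine, and the main obstacle is simply bookkeeping the neighborhood on which all these local conditions hold simultaneously, which is a finite intersection of open sets and hence again open and nonempty.

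I would close by remarking that continuity of the derivative (the ``$C^1$'' rather than merely ``differentiable'' conclusion stated in the lemma) comes for free: each constituent map is real-analytic on the relevant neighborhood except possibly for the $\Re(\cdot)$ safeguard and the orientation sign, both of which are (locally) linear/constant and therefore $C^\infty$, so the composition is real-analytic, a fortiori $C^1$, at $\theta$. This is all that is needed downstream: differentiability of $R$ at $\theta$ lets a joint Delta-method argument (combined with the multivariate CLT for $\hat\theta$ furnished by Assumption~\ref{ass:appendix_joint}(i), since $Z(X)$ has finite second moments under $\mathbb{E}\|X\|^6<\infty$) deliver $\sqrt n\,(\hat r - r_0)\xrightarrow{d}\mathcal N(0,J\,\mathrm{Var}(Z(X))\,J^\top)$, and Assumption~\ref{ass:appendix_joint}(iii) makes the limiting covariance nonsingular so that the Wald statistic is asymptotically $\chi^2_{\binom d2}$.
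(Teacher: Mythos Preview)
Your proposal is correct and follows essentially the same approach as the paper's own proof: decompose $R$ as a chain of maps (polynomial $C_3$ and $\Sigma$, analytic eigenvector map under simple eigenvalues, smooth matrix product, linear $\vech_{\mathrm{off}}$) and invoke the chain rule. Your treatment is in fact more explicit than the paper's---you separately handle the matrix-inversion step inside the Hessian construction and argue that the orientation sign is locally constant---but the underlying strategy is identical.
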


\begin{proof}
$C_3(\cdot)$ and $\Sigma(\cdot)$ are polynomial (hence analytic) in the coordinates of $\theta$. By Section~5 (Lemma~\ref{lemma:g_diff} and Theorem~\ref{thm:main_sample_analogue_theorem}), the oriented eigenvector map is analytic in a neighborhood of $C_3(\theta)$ under simple eigenvalues. Matrix products and $\vech_{\mathrm{off}}$ are smooth linear operations. Composition of smooth maps is smooth.
\end{proof}

\begin{theorem}[Asymptotic normality of $\hat r$]\label{thm:AR_normal}
Under Assumption~\ref{ass:appendix_joint}, we have
\[
\sqrt{n}\,\bigl(\hat r - r_0\bigr)
\;=\;
J\,\sqrt{n}\,(\hat\theta-\theta)\;+\;o_p(1)
\;\xrightarrow{d}\;
\mathcal{N}\!\bigl(0,\;\Omega\bigr),
\qquad
\Omega \;=\; J\,\Sigma_\theta\,J^\top,
\]
where $\Sigma_\theta:=\mathrm{Var}\!\big(Z(X)\big)$.
A consistent estimator is $\hat\Omega=\hat J\,\hat\Sigma_\theta\,\hat J^\top$, with
\[
\hat\Sigma_\theta=\frac{1}{n}\sum_{i=1}^n\bigl(Z(X_i)-\hat\theta\bigr)\bigl(Z(X_i)-\hat\theta\bigr)^\top,
\qquad
\hat J=DR(\hat\theta).
\]
\end{theorem}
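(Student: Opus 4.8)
The plan is a textbook two-step delta-method argument: establish a multivariate CLT for the raw-moment average $\hat\theta$, then transport it through the smooth map $R$ whose differentiability was already secured in Lemma~\ref{lem:R_diff}. First I would verify the CLT for $\hat\theta=\tfrac1n\sum_{i=1}^n Z(X_i)$. Each coordinate of $Z(X)$ is a monomial in the coordinates of $X$ of total degree at most $3$, so $\mathbb{E}\,\|Z(X)\|^{2}<\infty$ follows from $\mathbb{E}\,\|X\|^{6}<\infty$ (Assumption~\ref{ass:appendix_joint}(i)) by the Cauchy--Schwarz / power-mean inequality; hence $\Sigma_\theta=\mathrm{Var}(Z(X))$ is finite and the i.i.d.\ Lindeberg--L\'evy CLT gives $\sqrt{n}(\hat\theta-\theta)\xrightarrow{d}\mathcal{N}(0,\Sigma_\theta)$.

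Second, I would apply the delta method. By Lemma~\ref{lem:R_diff}, $R$ is continuously differentiable in a neighborhood of $\theta$ with Jacobian $J=DR(\theta)$, so the standard delta method yields $\sqrt{n}\bigl(R(\hat\theta)-R(\theta)\bigr)=J\,\sqrt{n}(\hat\theta-\theta)+o_p(1)\xrightarrow{d}\mathcal{N}(0,J\Sigma_\theta J^\top)$. It then remains to identify $\hat r=R(\hat\theta)$ and $r_0=R(\theta)$: the centered sample covariance $\widehat\Sigma_X$ is exactly the polynomial map $\Sigma(\cdot)$ evaluated at the raw sample moments, and the oriented eigenvector map underlying $\widetilde\Lambda$ is real-analytic and real-valued near the population (Section~5, where the simplicity of the eigenvalues of $H\!\big(C_3(\theta)\big)$ makes the real-part safeguard asymptotically inactive), so $R(\hat\theta)$ reproduces $\vech_{\mathrm{off}}\!\big(\hat{\widetilde\Lambda}\,\widehat\Sigma_X\,\hat{\widetilde\Lambda}^\top\big)$ on an event of probability tending to one. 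This delivers $\sqrt{n}(\hat r-r_0)\xrightarrow{d}\mathcal{N}(0,\Omega)$ with $\Omega=J\Sigma_\theta J^\top$. For consistency of the variance estimator, $\hat\Sigma_\theta\xrightarrow{p}\Sigma_\theta$ by the weak law of large numbers applied to $(Z(X_i)-\hat\theta)(Z(X_i)-\hat\theta)^\top$ (again licensed by the degree-$6$ moment bound), while $\hat J=DR(\hat\theta)\xrightarrow{p}DR(\theta)=J$ because $\hat\theta\xrightarrow{p}\theta$ and $DR$ is continuous at $\theta$ (Lemma~\ref{lem:R_diff}); Slutsky / the continuous mapping theorem for matrix products then gives $\hat\Omega=\hat J\,\hat\Sigma_\theta\,\hat J^\top\xrightarrow{p}\Omega$.

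The heavy lifting here is inherited rather than new, so there is no single hard obstacle; the only delicate points are bookkeeping. The first is that the delta method needs $R$ differentiable \emph{on a neighborhood} of $\theta$ (not merely at the point) and the eigenvector map single-valued and real there—both follow from simplicity of the eigenvalues of $H\!\big(C_3(\theta)\big)$ (Assumption~\ref{ass:appendix_joint}(ii), holding almost surely over the draw of $w_1$ as in Theorem~\ref{thm:eig-decomp}) together with the eigenvector-perturbation results cited for Lemma~\ref{lemma:g_diff}. The second is tracking that $\widehat\Sigma_X$, $\hat{\widetilde\Lambda}$, and $\hat r$ are all exact (or asymptotically exact on a probability-one event) evaluations of fixed polynomial/analytic maps at $\hat\theta$, so that no separate stochastic-equicontinuity argument is required. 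Finally, although Assumption~\ref{ass:appendix_joint}(iii) (full row rank of $J$) is not logically needed for the statement of this theorem, it is exactly what guarantees $\Omega$ is nonsingular, which will be invoked when forming $\hat\Omega^{-1}$ in the subsequent Wald statistic.
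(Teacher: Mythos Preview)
Your proposal is correct and follows essentially the same route as the paper: a multivariate CLT for $\hat\theta$ (licensed by the sixth-moment bound in Assumption~\ref{ass:appendix_joint}(i)), the delta method through $R$ via Lemma~\ref{lem:R_diff}, and continuous-mapping/Slutsky for consistency of $\hat\Omega$. Your write-up is in fact more detailed than the paper's proof---you spell out the moment check for $Z(X)$, the identification $\hat r=R(\hat\theta)$ on a high-probability event, and the role of Assumption~\ref{ass:appendix_joint}(iii)---but the underlying argument is the same (one small quibble: the delta method only requires differentiability \emph{at} $\theta$, not on a neighborhood, so that caveat is harmless but unnecessary).
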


\begin{proof}
By the multivariate CLT and Assumption~\ref{ass:appendix_joint}(i),
$\sqrt{n}(\hat\theta-\theta)\Rightarrow \mathcal{N}(0,\Sigma_\theta)$. By Lemma~\ref{lem:R_diff} and the Delta Method,
\[
\sqrt{n}\bigl(\hat r - r_0\bigr) = DR(\theta)\,\sqrt{n}(\hat\theta-\theta)+o_p(1) \Rightarrow \mathcal{N}(0,J\Sigma_\theta J^\top).
\]
Consistency of $\hat\Omega$ follows from the continuous mapping theorem and the consistency of the sample covariance, since $DR(\cdot)$ is continuous at $\theta$.
\end{proof}

\begin{theorem}[Wald $\chi^2$ test for uncorrelated structural errors]\label{thm:wald_chi2}
Under the null of uncorrelated structural errors, $r_0=0$. Under Assumption~\ref{ass:appendix_joint},
\[
T_n \;:=\; n\,\hat r^\top\,\hat\Omega^{-1}\,\hat r
\;\xrightarrow{d}\;
\chi^2_{\binom{d}{2}}.
\]
\end{theorem}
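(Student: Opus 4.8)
The plan is to treat $T_n$ as a textbook Wald statistic and read the limiting law off Theorem~\ref{thm:AR_normal}. Under the null of uncorrelated structural errors we have already shown $r_0=\vech_{\mathrm{off}}\!\big(\widetilde{\Lambda}(M)\,\Sigma\,\widetilde{\Lambda}(M)^\top\big)=0$, so Theorem~\ref{thm:AR_normal} gives immediately
\[
\sqrt{n}\,\hat r \;=\; \sqrt{n}\,(\hat r-r_0)\;\xrightarrow{d}\;\mathcal N(0,\Omega),\qquad \Omega=J\,\Sigma_\theta\,J^\top .
\]
The first substantive step is to verify that $\Omega$ is nonsingular, since a $\chi^2$ limit with exactly $\binom{d}{2}$ degrees of freedom requires it. By Assumption~\ref{ass:appendix_joint}(iii) the Jacobian $J=DR(\theta)$ has full row rank $\binom{d}{2}$; provided $\Sigma_\theta=\mathrm{Var}(Z(X))$ is positive definite on $\mathbb R^{p}$ — equivalently, there is no nontrivial degree-$\le 3$ polynomial in $X$ that is almost surely constant, a mild nondegeneracy condition on the distribution of $X$ — it follows that $\Omega=J\Sigma_\theta J^\top\succ 0$. (If one prefers the weaker hypothesis, it suffices that the row space of $J$ intersect $\ker\Sigma_\theta$ only at $0$.)

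Next I would deal with the estimated weighting matrix. Theorem~\ref{thm:AR_normal} already supplies $\hat\Omega=\hat J\,\hat\Sigma_\theta\,\hat J^\top\xrightarrow{p}\Omega$, using continuity of $DR(\cdot)$ at $\theta$ (Lemma~\ref{lem:R_diff}) together with consistency of the raw-moment sample covariance. Since $\Omega$ is invertible and matrix inversion is continuous there, $\hat\Omega$ is invertible with probability approaching one and $\hat\Omega^{-1}\xrightarrow{p}\Omega^{-1}$. Writing $T_n=(\sqrt n\,\hat r)^\top\hat\Omega^{-1}(\sqrt n\,\hat r)$ and combining the weak convergence of $\sqrt n\,\hat r$ with the convergence in probability of $\hat\Omega^{-1}$ via Slutsky's theorem and the continuous mapping theorem yields
\[
T_n\;\xrightarrow{d}\;Z^\top\Omega^{-1}Z,\qquad Z\sim\mathcal N(0,\Omega).
\]
Finally, $\Omega^{-1/2}Z\sim\mathcal N(0,I_{\binom{d}{2}})$, so $Z^\top\Omega^{-1}Z=\|\Omega^{-1/2}Z\|^2\sim\chi^2_{\binom{d}{2}}$, which is the claim.

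The only step that is not purely mechanical is the nonsingularity of $\Omega$, i.e.\ Assumption~\ref{ass:appendix_joint}(iii), and I expect this to be the point deserving the most comment: because $R=\vech_{\mathrm{off}}\!\big(\widetilde{\Lambda}(C_3(\cdot))\,\Sigma(\cdot)\,\widetilde{\Lambda}(C_3(\cdot))^\top\big)$ is a composition that runs through the only implicitly defined eigenvector map, the rank of $J$ is not transparent from the formula. One can argue it holds generically: $C_3(\cdot)$ and $\Sigma(\cdot)$ are polynomial and, on the region of simple eigenvalues, the oriented eigenvector map is real-analytic (as used in Lemma~\ref{lemma:g_diff} and Theorems~\ref{thm:eig-decomp}, \ref{thm:main_sample_analogue_theorem}); hence $R$ is real-analytic and the set of parameters at which $J$ drops rank is a proper analytic subvariety, of Lebesgue measure zero. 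In applications one may also simply check the rank numerically at $\hat\theta$. A secondary, routine point is that $\hat\Omega$ can fail to be positive definite in very small samples; this is innocuous asymptotically since $\hat\Omega\xrightarrow{p}\Omega\succ 0$, and in practice one can project onto the PSD cone or use a generalized inverse without affecting the limit.
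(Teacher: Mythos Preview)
Your proof is correct and follows the same route as the paper: invoke Theorem~\ref{thm:AR_normal} under the null to get $\sqrt{n}\,\hat r\Rightarrow\mathcal N(0,\Omega)$ and $\hat\Omega\to_p\Omega$, then combine Slutsky and the continuous mapping theorem for the quadratic form. Your discussion is in fact slightly more careful than the paper's one-line appeal to Assumption~\ref{ass:appendix_joint}(iii): you correctly note that full row rank of $J$ alone does not force $\Omega=J\Sigma_\theta J^\top$ to be nonsingular without the additional (mild) requirement that $\Sigma_\theta$ be positive definite, which the paper treats as implicit.
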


\begin{proof}
From Theorem~\ref{thm:AR_normal} with $r_0=0$, $\sqrt{n}\,\hat r \Rightarrow \mathcal{N}(0,\Omega)$ and $\hat\Omega\to_p \Omega$ with $\Omega$ nonsingular by Assumption~\ref{ass:appendix_joint}(iii). The claim follows by Slutsky’s theorem and the continuous mapping theorem for quadratic forms (Wald statistic).
\end{proof}

\section{Estimated vs.\ True Residuals}

This appendix shows that replacing the oracle residuals $e_t$ by estimated residuals
$\hat e_t$ has no first–order effect on our statistics. The proof follows two steps:
(i) a residual–replacement lemma for raw moments up to degree~3; and
(ii) a generic $\Delta$‑method invariance statement. A short corollary then covers
our over‑identification test.

\paragraph{Set‑up.}
Let a mean zero process $\{Y_t\}$ follow a stable VAR($p$):
\[
Y_t \;=\; A_1 Y_{t-1}+\cdots+A_p Y_{t-p}+ e_t,
\qquad
e_t \;=\; B\,u_t,
\qquad
\mathbb E[Y_t]=0
\]
where $u_t\in\mathbb R^{d}$ are i.i.d., $\mathbb E[u_t]=0$, and have finite moments
up to order $6+\eta$ for some $\eta>0$. The lag length $p$ is fixed and correctly specified.
Let $\hat A_\ell$ be the OLS estimators and define
\[
\hat e_t \;:=\; Y_t - \sum_{\ell=1}^p \hat A_\ell Y_{t-\ell},
\qquad
\delta_t \;:=\; \hat e_t - e_t \;=\; -\sum_{\ell=1}^p (\hat A_\ell-A_\ell)\,Y_{t-\ell}.
\]
Let $Z(\cdot)$ stack all raw monomials of total degree $1$–$3$ of its $d$‑vector
argument (as in Section~5). Write
\[
\bar Z_n \;:=\; \frac{1}{n}\sum_{t=1}^n Z(e_t),
\qquad
\hat Z_n \;:=\; \frac{1}{n}\sum_{t=1}^n Z(\hat e_t),
\qquad
\theta_0 \;:=\; \mathbb E[Z(e_t)] .
\]

\paragraph{Assumptions.}
\begin{enumerate}
\item \emph{Stability \& moments.} The VAR is stable; $\{Y_t\}$ is strictly stationary
and ergodic with $\mathbb E\|Y_t\|^{6}<\infty$.
\item \emph{First‑stage rate.} With fixed $p$,
$\sqrt{n}\,\mathrm{vec}(\hat A_1-\!A_1,\dots,\hat A_p-\!A_p)=O_p(1)$.
\item \emph{Smoothness at the population.} Any map $G$ we apply to the raw‑moment
vector (centering $\to$ cumulants $\to$ Hessians $\to$ inverse/product $\to$ eigenvectors
with a fixed orientation rule $\to$ demixed covariance $\to$ off‑diagonals) is continuously
differentiable at $\theta_0$. \emph{Comment:} In our application this holds because
the centering/cumulant maps are polynomial; inversion is smooth on nonsingular neighborhoods;
and the eigenvector map is analytic under simple eigenvalues, which we assume in Section~5.
\end{enumerate}

\paragraph{Norm convention.}
Vectors use $\|\cdot\|_2$; matrices use the operator norm $\|\cdot\|_{\mathrm{op}}$.
All $O_p(\cdot)$ orders are invariant to replacing $\|\cdot\|_{\mathrm{op}}$ by an equivalent norm
(e.g., Frobenius).

\begin{lemma}[Residual replacement is $o_p(n^{-1/2})$ in mean–zero VAR]
\label{lem:RR-littleo-mz}
Under the setup above,
\[
\sqrt{n}\,\bigl\|\hat Z_n-\bar Z_n\bigr\| \;\xrightarrow{p}\; 0 .
\]
\end{lemma}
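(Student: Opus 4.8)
The plan is to expand each coordinate of $Z(\hat e_t)-Z(e_t)$ into a finite sum of monomials, every one of which carries at least one factor of $\delta_t=\hat e_t-e_t=-\sum_{\ell=1}^p(\hat A_\ell-A_\ell)Y_{t-\ell}$, and then to show that the $t$-average of each such monomial, rescaled by $\sqrt n$, is $o_p(1)$. Concretely, for a raw monomial $\prod_{j=1}^m e_{t,a_j}$ of total degree $m\in\{1,2,3\}$, substitute $\hat e_{t,a_j}=e_{t,a_j}+\delta_{t,a_j}$ and expand the product:
\[
\prod_{j=1}^m\hat e_{t,a_j}-\prod_{j=1}^m e_{t,a_j}
=\sum_{\varnothing\neq T\subseteq\{1,\dots,m\}}\Big(\prod_{j\in T}\delta_{t,a_j}\Big)\Big(\prod_{j\notin T}e_{t,a_j}\Big).
\]
Plugging in $\delta_{t,a_j}=-\sum_{\ell}\sum_{b}(\hat A_\ell-A_\ell)_{a_j b}\,Y_{t-\ell,b}$ rewrites the $t$-average of each term as a finite linear combination of $(\hat A_{\ell_1}-A_{\ell_1})_{\bullet}\cdots(\hat A_{\ell_r}-A_{\ell_r})_{\bullet}\cdot n^{-1}\sum_t\big(\prod_{s=1}^r Y_{t-\ell_s,b_s}\big)\big(\prod_{j\notin T}e_{t,a_j}\big)$, where $r=|T|\ge1$ and the trailing sample average is over a monomial of total degree $m\le 3$ in $(Y_{t-1},\dots,Y_{t-p},e_t)$.

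The argument then splits on $r$. When $r\ge2$, Assumption~2 gives $\prod_s(\hat A_{\ell_s}-A_{\ell_s})=O_p(n^{-r/2})$, while strict stationarity, ergodicity, and $\mathbb E\|Y_t\|^{6}<\infty$ (hence $\mathbb E\|e_t\|^6<\infty$, since $e_t=Bu_t$) make the paired sample average $O_p(1)$ by the ergodic theorem and Markov's inequality — the relevant monomial has total degree $\le3\le6$, so it is integrable by Hölder. Thus each $r\ge2$ term is $O_p(n^{-r/2})$, and multiplying by $\sqrt n$ leaves $O_p(n^{(1-r)/2})=o_p(1)$. The delicate case is $r=1$, where $\hat A_\ell-A_\ell=O_p(n^{-1/2})$ only, so we need the paired average $n^{-1}\sum_t Y_{t-\ell,b}\prod_{j\notin T}e_{t,a_j}$ to be $o_p(1)$, not merely $O_p(1)$. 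This is exactly where mean-zeroness enters: since $p$ is fixed, $\ell\ge1$, and $e_t$ is independent of $\{Y_s:s<t\}$, the ergodic limit of that average is $\mathbb E\big[Y_{t-\ell,b}\prod_{j\notin T}e_{t,a_j}\big]=\mathbb E[Y_{t-\ell,b}]\,\mathbb E\big[\prod_{j\notin T}e_{t,a_j}\big]=0$ because $\mathbb E[Y_t]=0$. Hence the $r=1$ term is $O_p(n^{-1/2})\cdot o_p(1)=o_p(n^{-1/2})$, and $\sqrt n$ times it is $o_p(1)$.

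To finish, note that $Z$ has a fixed finite number of coordinates, each with a fixed finite expansion (finitely many subsets $T$, lags $\ell$, and index tuples), so $\sqrt n(\hat Z_n-\bar Z_n)$ is a finite sum of terms each shown to be $o_p(1)$, giving $\sqrt n\|\hat Z_n-\bar Z_n\|\xrightarrow{p}0$. A few routine points are dispatched in passing: boundary effects from the first $p$ observations contribute only $O(p/n)$ terms; the bookkeeping $O_p(a_n)\cdot o_p(1)=o_p(a_n)$ holds irrespective of any dependence between $\hat A_\ell-A_\ell$ and the sample averages, being a deterministic consequence of the definitions of $O_p,o_p$; and integrability of every monomial encountered follows from $\mathbb E\|Y_t\|^{6}<\infty$ via Hölder, which comfortably covers total degree $\le3$. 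The main obstacle — and the only conceptual step — is the $r=1$ term: one must recognize that it is the mean-zero assumption on $\{Y_t\}$, together with independence of $e_t$ from the past, that upgrades its sample average from $O_p(1)$ to $o_p(1)$; without $\mathbb E[Y_t]=0$ this term would be genuinely $O_p(n^{-1/2})$ and would have to be tracked as part of the first-order influence function, which is precisely why the lemma is stated for the mean-zero VAR (and why, with an intercept, one demeans first).
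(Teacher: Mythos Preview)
Your proof is correct and follows essentially the same decomposition as the paper: expand each polynomial coordinate into terms carrying one or more factors of $\delta_t$, control the higher-order ($r\ge 2$) terms by the $O_p(n^{-r/2})$ rate of the OLS error times an $O_p(1)$ ergodic average, and for the linear ($r=1$) term use independence of $e_t$ from the past together with $\mathbb{E}[Y_t]=0$ to show the cross-moment is centered. The only substantive difference is that the paper invokes a CLT for $n^{-1}\sum_t \nabla z(e_t)Y_{t-\ell}^\top$ to obtain the sharper rate $O_p(n^{-1/2})$ on that average (hence $O_p(n^{-1})$ overall), whereas you use only the ergodic theorem to get $o_p(1)$ (hence $o_p(n^{-1/2})$); both suffice for the stated lemma, and your route is slightly more elementary at the cost of a weaker rate.
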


\begin{proof}
Fix a coordinate $z(\cdot)$ of $Z(\cdot)$ (total degree $\le 3$). A second–order Taylor expansion at $e_t$
yields
\[
z(\hat e_t)-z(e_t)
= \underbrace{\nabla z(e_t)^\top \delta_t}_{L_t}
\;+\;
\underbrace{\tfrac12\,\delta_t^\top H_t\,\delta_t}_{Q_t},
\qquad
H_t:=\nabla^2 z(e_t+\tau_t\delta_t),\ \tau_t\in(0,1).
\]

\emph{Quadratic term.} Since $z$ is at most cubic 

\begin{align*}
|Q_t|
&= \tfrac12 \big|\delta_t^\top H_t \delta_t\big|
\;\le\; \tfrac12 \,\|H_t\|_{\mathrm{op}}\,\|\delta_t\|_2^2
\quad\text{(since }|x^\top M x|\le \|M\|_{\mathrm{op}}\|x\|_2^2\text{)}\\
\frac1n\sum_{t=1}^n |Q_t|
&\le\; \frac12\,\frac1n\sum_{t=1}^n
\underbrace{\|H_t\|_{\mathrm{op}}}_{=:a_t}\,
\underbrace{\|\delta_t\|_2^2}_{=:b_t}
\;\le\; \frac12
\Big(\frac1n\sum_{t=1}^n a_t^2\Big)^{\!1/2}
\Big(\frac1n\sum_{t=1}^n b_t^2\Big)^{\!1/2}\\
&=\; \frac12
\Big(\frac1n\sum_{t=1}^n \|H_t\|_{\mathrm{op}}^2\Big)^{\!1/2}
\Big(\frac1n\sum_{t=1}^n \|\delta_t\|_2^4\Big)^{\!1/2},
\end{align*}

\[
\frac1n\sum_{t=1}^n |Q_t|
\;\le\; \frac12
\Big(\frac1n\sum \|H_t\|_{\mathrm{op}}^2\Big)^{1/2}
\Big(\frac1n\sum \|\delta_t\|_2^4\Big)^{1/2}.
\]
Write $\delta_t=-\Delta W_t$, $W_t:=(Y_{t-1}^\top,\dots,Y_{t-p}^\top)^\top$.
Then $\|\delta_t\|\le \|\Delta\|_{\mathrm{op}}\|W_t\|$, with $\|\Delta\|_{\mathrm{op}}=O_p(n^{-1/2})$,
and stability gives $\frac1n\sum \|W_t\|^4=O_p(1)$ and $\frac1n\sum \|H_t\|_{\mathrm{op}}^2=O_p(1)$.
Hence $\frac1n\sum |Q_t| = O_p(n^{-1})$.

Note: 
\[
H_t \;=\; \nabla^2 z\big(e_t+\tau_t \delta_t\big),
\quad z \text{ has total degree }\le 3.
\]
For such $z$, the Hessian is affine: $\nabla^2 z(x)=H_0+\sum_{j=1}^d x_j H_j$, hence
\[
\big\|\nabla^2 z(x)\big\|_{\mathrm{op}} \;\le\; c_0 + c_1 \|x\|_2.
\]
Evaluating at $x=e_t+\tau_t\delta_t$,
\[
\|H_t\|_{\mathrm{op}}
\;\le\; c_0 + c_1\big(\|e_t\|_2+\|\delta_t\|_2\big)
\;=\; O_p(1)+o_p(1)\;=\;O_p(1),
\]

Thus $\|H_t\|_{\mathrm{op}}^2=O_p(1)$.
\[
\frac1n\sum_{t=1}^n \|H_t\|_{\mathrm{op}}^2
\;\le\; C\Big(1+\frac1n\sum_{t=1}^n \|e_t\|_2^2 + \frac1n\sum_{t=1}^n \|\delta_t\|_2^2\Big)
\;=\; O_p(1),
\]

\emph{Linear term.} Using $\delta_t=-\sum_{\ell} \Delta A_\ell Y_{t-\ell}$,
\[
\frac1n\sum_{t=1}^n L_t
= -\sum_{\ell=1}^p \Big\langle \Delta A_\ell,\ \frac1n\sum_{t=1}^n \nabla z(e_t)\,Y_{t-\ell}^\top \Big\rangle .
\]

note: \[
\langle U,V\rangle \;:=\; \operatorname{tr}(U^\top V)\;=\;\sum_{i,j} U_{ij}V_{ij},
\qquad
a^\top M b \;=\; \operatorname{tr}(M^\top a b^\top)\;=\;\langle M,\; a b^\top\rangle.
\]

Because the model is mean–zero, $\mathbb{E}[Y_{t-\ell}]=0$, and $e_t$ is independent of $Y_{t-\ell}$,
so $\mathbb{E}[\nabla z(e_t)\,Y_{t-\ell}^\top]=\mathbb{E}[\nabla z(e_t)]\,\mathbb{E}[Y_{t-\ell}]^\top=0$.
A CLT for these square‑integrable stationary terms gives
$\frac1n\sum \nabla z(e_t)Y_{t-\ell}^\top = O_p(n^{-1/2})$; multiplying by
$\|\Delta A_\ell\|=O_p(n^{-1/2})$ yields $\frac1n\sum L_t = O_p(n^{-1})$.

Combining and stacking the finitely many coordinates of $Z(\cdot)$:
$\|\hat Z_n-\bar Z_n\|=O_p(n^{-1})$, hence $\sqrt{n}\,\|\hat Z_n-\bar Z_n\|\to_p 0$.
\end{proof}

\begin{theorem}[Same first–order limit for $G(\hat Z_n)$ and $G(\bar Z_n)$]
\label{thm:same-first-order}
Let $G:\mathbb{R}^m\to\mathbb{R}^k$ be continuously differentiable at $\theta_0=\mathbb{E}[Z(e_t)]$.
Here $G$ can be the composition used in the paper: raw $\to$ centering $\to$ third cumulants
$\to$ Hessians $\to$ inverse/product $\to$ eigenvectors (simple spectrum with a fixed, continuous orientation rule)
$\to$ demixed covariance $\to$ unique off‑diagonals. If
$\sqrt{n}\,(\bar Z_n-\theta_0)\Rightarrow \mathcal N(0,\Sigma_Z)$, then by
Lemma~\ref{lem:RR-littleo-mz}
\[
\sqrt{n}\,\bigl(G(\hat Z_n)-G(\theta_0)\bigr)
\;-\;
\sqrt{n}\,\bigl(G(\bar Z_n)-G(\theta_0)\bigr)
\;\xrightarrow{p}\; 0,
\]
and therefore
\[
\sqrt{n}\,\bigl(G(\hat Z_n)-G(\theta_0)\bigr)
\;\Rightarrow\;
\mathcal N\!\bigl(0,\; D G(\theta_0)\,\Sigma_Z\,D G(\theta_0)^\top\bigr).
\]
\end{theorem}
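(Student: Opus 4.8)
The plan is to combine the residual–replacement bound of Lemma~\ref{lem:RR-littleo-mz} with a first–order expansion of $G$ around $\theta_0$, and then to conclude via the Delta method applied to the oracle average $\bar Z_n$ together with Slutsky's theorem. First I would record two preliminary consistency facts: the CLT hypothesis gives $\bar Z_n-\theta_0=O_p(n^{-1/2})$, hence $\bar Z_n\xrightarrow{p}\theta_0$; and Lemma~\ref{lem:RR-littleo-mz} gives $\hat Z_n-\bar Z_n=o_p(n^{-1/2})$, hence also $\hat Z_n\xrightarrow{p}\theta_0$. Combining these by the triangle inequality, $\sqrt{n}\,\|\hat Z_n-\theta_0\|\le \sqrt{n}\,\|\hat Z_n-\bar Z_n\|+\sqrt{n}\,\|\bar Z_n-\theta_0\|=o_p(1)+O_p(1)=O_p(1)$, so $\hat Z_n-\theta_0=O_p(n^{-1/2})$ as well.

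Next I would use differentiability of $G$ at $\theta_0$ to write $G(\theta)=G(\theta_0)+DG(\theta_0)(\theta-\theta_0)+R(\theta)$ with $\|R(\theta)\|=o(\|\theta-\theta_0\|)$ as $\theta\to\theta_0$. Evaluating this at $\theta=\hat Z_n$ and at $\theta=\bar Z_n$ and subtracting yields
\[
\sqrt{n}\bigl(G(\hat Z_n)-G(\bar Z_n)\bigr)
= DG(\theta_0)\,\sqrt{n}(\hat Z_n-\bar Z_n)
+ \sqrt{n}\,R(\hat Z_n) - \sqrt{n}\,R(\bar Z_n).
\]
The first term is $o_p(1)$ since it is a fixed linear map applied to a vector that is $o_p(1)$ by Lemma~\ref{lem:RR-littleo-mz}. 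For the remainder terms, write $\sqrt{n}\,R(\bar Z_n)=\bigl(\sqrt{n}\|\bar Z_n-\theta_0\|\bigr)\cdot\bigl(\|R(\bar Z_n)\|/\|\bar Z_n-\theta_0\|\bigr)$, where the first factor is $O_p(1)$ and the second is $o_p(1)$ because $\bar Z_n\xrightarrow{p}\theta_0$ and $\|R(\theta)\|/\|\theta-\theta_0\|\to0$ as $\theta\to\theta_0$ (with the ratio read as $0$ on the event $\bar Z_n=\theta_0$); the same argument with $\hat Z_n$ in place of $\bar Z_n$, using $\sqrt{n}\|\hat Z_n-\theta_0\|=O_p(1)$ and $\hat Z_n\xrightarrow{p}\theta_0$, gives $\sqrt{n}\,R(\hat Z_n)=o_p(1)$. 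Hence $\sqrt{n}\bigl(G(\hat Z_n)-G(\bar Z_n)\bigr)\xrightarrow{p}0$, which (after cancelling the common $G(\theta_0)$) is exactly the first displayed claim.

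Finally, the Delta method applied to $\bar Z_n$—valid because $G$ is differentiable at $\theta_0$ and $\sqrt{n}(\bar Z_n-\theta_0)\Rightarrow\mathcal N(0,\Sigma_Z)$—gives $\sqrt{n}\bigl(G(\bar Z_n)-G(\theta_0)\bigr)\Rightarrow\mathcal N\!\bigl(0,DG(\theta_0)\,\Sigma_Z\,DG(\theta_0)^\top\bigr)$. Writing $\sqrt{n}\bigl(G(\hat Z_n)-G(\theta_0)\bigr)=\sqrt{n}\bigl(G(\bar Z_n)-G(\theta_0)\bigr)+\bigl[\sqrt{n}\bigl(G(\hat Z_n)-G(\bar Z_n)\bigr)\bigr]$ and applying Slutsky's theorem, with the bracketed term $o_p(1)$ by the previous paragraph, yields the stated limit. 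I do not expect a genuine obstacle here; the only point needing care is the bookkeeping on the Taylor remainder—specifically establishing $\sqrt{n}\|\hat Z_n-\theta_0\|=O_p(1)$ so that the $o(\|\cdot\|)$ remainder evaluated at $\hat Z_n$ is genuinely $o_p(n^{-1/2})$—which is precisely where Lemma~\ref{lem:RR-littleo-mz} enters a second time through the triangle inequality. (If one prefers to assume $G\in C^1$ on a neighborhood of $\theta_0$, an entirely parallel argument via the mean value theorem and continuity of $DG$ at $\theta_0$ also works.)
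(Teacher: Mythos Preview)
Your proposal is correct and follows exactly the route the paper intends: the paper does not give a separate proof of this theorem but simply states that the conclusion follows ``by Lemma~\ref{lem:RR-littleo-mz}'' together with the Delta method, and your write-up is precisely a careful unpacking of that implicit argument (Taylor expansion at $\theta_0$, remainder control via $\sqrt{n}\|\hat Z_n-\theta_0\|=O_p(1)$, then Slutsky). The only bookkeeping step the paper leaves tacit---that $\hat Z_n-\theta_0=O_p(n^{-1/2})$ so the remainder at $\hat Z_n$ is $o_p(n^{-1/2})$---you handle correctly via the triangle inequality and a second invocation of Lemma~\ref{lem:RR-littleo-mz}.
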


\begin{corollary}[Wald test based on estimated residuals]
\label{cor:wald-mz}
Let $\hat r$ stack the strict upper‑triangular elements of the demixed covariance produced by $G(\hat Z_n)$.
Under the null of uncorrelated structural shocks and the identification/separation conditions in the paper
(simple eigenvalues; fixed, continuous orientation), we have
\[
\sqrt{n}\,\hat r \;\Rightarrow\; \mathcal N(0,\Omega),
\qquad
\Omega \;=\; D R(\theta_0)\,\Sigma_Z\,D R(\theta_0)^\top,
\]
and the Wald statistic $T_n := n\,\hat r^\top \hat\Omega^{-1}\hat r \Rightarrow \chi^2_{\binom{d}{2}}$
for any consistent $\hat\Omega$.
\end{corollary}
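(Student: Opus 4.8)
The plan is to chain together the three ingredients already assembled in this appendix—the residual-replacement lemma, the generic $\Delta$-method invariance theorem, and the exact-zero population restriction under the null—so that the corollary reduces to the same Wald-statistic limit argument used in Theorem~\ref{thm:wald_chi2}, now applied on top of a first-stage negligibility step.

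First I would record the base CLT. Since $e_t=B u_t$ with $u_t$ i.i.d.\ and $\mathbb{E}\|u_t\|^{6+\eta}<\infty$, the residuals $e_t$ are themselves i.i.d.\ with finite moments up to order $6+\eta$; because each coordinate of $Z(\cdot)$ has total degree at most $3$, $Z(e_t)$ is square-integrable, so the classical multivariate CLT gives $\sqrt{n}(\bar Z_n-\theta_0)\Rightarrow\mathcal N(0,\Sigma_Z)$ with $\Sigma_Z=\operatorname{Var}(Z(e_t))$. Next I would invoke Lemma~\ref{lem:RR-littleo-mz}, whose hypotheses (stable mean-zero VAR, fixed correctly specified $p$, $\sqrt{n}$-rate OLS, finite sixth moments) are precisely the standing assumptions here, to conclude $\sqrt{n}\,\|\hat Z_n-\bar Z_n\|\to_p 0$. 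I would then take $G=R$, the composite map raw moments $\to$ centered third cumulants $\to$ Hessians $\to$ inverse/product $\to$ oriented eigenvectors $\to$ demixed covariance $\to$ strict upper-triangular entries, which is continuously differentiable at $\theta_0$ by Lemma~\ref{lem:R_diff} (polynomial centering/cumulant maps; smooth matrix inversion on a nonsingular neighborhood; analytic eigenvector map under the simple-eigenvalue and fixed-orientation conditions inherited from Section~5). Applying Theorem~\ref{thm:same-first-order} then gives $\sqrt{n}(G(\hat Z_n)-G(\theta_0))-\sqrt{n}(G(\bar Z_n)-G(\theta_0))\to_p 0$, hence $\sqrt{n}(\hat r-r_0)\Rightarrow\mathcal N(0,\Omega)$ with $\Omega=DR(\theta_0)\,\Sigma_Z\,DR(\theta_0)^\top$. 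Finally, under the null of uncorrelated structural shocks I would reproduce the population computation from Extension~3: $\operatorname{Var}(X)=A D_2 A^\top$ with $D_2$ diagonal forces $\widetilde\Lambda(\theta_0)\operatorname{Var}(X)\widetilde\Lambda(\theta_0)^\top=P^{\star}D^{\star}D_2 D^{\star}P^{\star\top}$, which is diagonal, so $r_0=R(\theta_0)=0$ and $\sqrt{n}\,\hat r\Rightarrow\mathcal N(0,\Omega)$.

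To pass from asymptotic normality of $\sqrt{n}\,\hat r$ to the $\chi^2$ limit, I would use the nonredundancy condition Assumption~\ref{ass:appendix_joint}(iii): $J=DR(\theta_0)$ has full row rank $\binom{d}{2}$, which makes $\Omega=J\Sigma_Z J^\top$ nonsingular. For any consistent $\hat\Omega\to_p\Omega$ (e.g.\ the plug-in $\hat J\hat\Sigma_Z\hat J^\top$ of Theorem~\ref{thm:AR_normal}, or a jackknife/bootstrap variant), write $T_n=(\sqrt{n}\,\hat r)^\top\hat\Omega^{-1}(\sqrt{n}\,\hat r)$; since $\hat\Omega^{-1}\to_p\Omega^{-1}$ by the continuous mapping theorem and $\sqrt{n}\,\hat r\Rightarrow\mathcal N(0,\Omega)$, Slutsky's theorem together with the continuous mapping theorem for the quadratic form $(v,M)\mapsto v^\top M v$ yields $T_n\Rightarrow \zeta^\top\Omega^{-1}\zeta$ with $\zeta\sim\mathcal N(0,\Omega)$, i.e.\ $T_n\Rightarrow\chi^2_{\binom{d}{2}}$.

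The main obstacle, and the only genuinely nontrivial part, is already discharged by Lemma~\ref{lem:RR-littleo-mz}: establishing that the first-stage OLS error contributes only $o_p(n^{-1/2})$ to the raw-moment vector. The structural fact that lemma exploits is the mean-zero specification, which makes $\mathbb{E}[\nabla z(e_t)\,Y_{t-\ell}^\top]=0$ (independence of $e_t$ and the lagged regressors), killing the leading linear Taylor term, while the quadratic term is $O_p(n^{-1})$ since $\|\hat A_\ell-A_\ell\|_{\mathrm{op}}=O_p(n^{-1/2})$ and stationarity bounds the empirical fourth moments of the regressors and of the (affine) Hessians. The remaining care points are minor: (i) the same smooth composition $R$ used for observables applies verbatim to residuals because $R$ acts only on the raw-moment vector and the orientation rule is fixed once and for all; and (ii) the simple-eigenvalue requirement holds at $\theta_0$, inherited from the almost-sure distinctness of the eigenvalues of $H$ for $w_2=\mathbf{1}$ and a random absolutely continuous $w_1$ as in Theorem~\ref{thm:eig-decomp}.
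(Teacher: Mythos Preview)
Your proposal is correct and follows essentially the same route the paper intends: the corollary is an immediate consequence of Lemma~\ref{lem:RR-littleo-mz} and Theorem~\ref{thm:same-first-order} applied with $G=R$, combined with the null restriction $r_0=0$ and the standard Wald argument of Theorem~\ref{thm:wald_chi2}. You have correctly identified and assembled all the pieces, including the nonredundancy condition needed for $\Omega$ to be nonsingular.
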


\newpage
\bibliographystyle{apalike}
\bibliography{references}  

\begin{thebibliography}{}

\bibitem[Angelini et~al., 2019]{https://doi.org/10.1002/jae.2672}
Angelini, G., Bacchiocchi, E., Caggiano, G., and Fanelli, L. (2019).
\newblock Uncertainty across volatility regimes.
\newblock {\em Journal of Applied Econometrics}, 34(3):437--455.

\bibitem[Baker et~al., 2016]{10.1093/qje/qjw024}
Baker, S.~R., Bloom, N., and Davis, S.~J. (2016).
\newblock Measuring economic policy uncertainty*.
\newblock {\em The Quarterly Journal of Economics}, 131(4):1593--1636.

\bibitem[Bonhomme and Robin, 2009]{bonhomme2009consistent}
Bonhomme, S. and Robin, J.-M. (2009).
\newblock Consistent noisy independent component analysis.
\newblock {\em Journal of Econometrics}, 149(1):12--25.

\bibitem[Cameron and Taber, 2004]{ee98fe21-acdf-39c2-ada6-7518abe3ad0b}
Cameron, S. and Taber, C. (2004).
\newblock Estimation of educational borrowing constraints using returns to schooling.
\newblock {\em Journal of Political Economy}, 112(1):132--182.

\bibitem[Card, 1993]{card1993using}
Card, D. (1993).
\newblock Using geographic variation in college proximity to estimate the return to schooling.
\newblock {\em NBER Working Paper}, (w4483).

\bibitem[Card, 2001]{card2001estimating}
Card, D. (2001).
\newblock Estimating the return to schooling: Progress on some persistent econometric problems.
\newblock {\em Econometrica}, 69(5):1127--1160.

\bibitem[Cardoso and Souloumiac, 1993]{cardoso1993blind}
Cardoso, J.-F. and Souloumiac, A. (1993).
\newblock Blind beamforming for non-gaussian signals.
\newblock In {\em IEE proceedings F (radar and signal processing)}, volume 140, pages 362--370. IET.

\bibitem[Chen and Bickel, 2006]{10.1214/009053606000000939}
Chen, A. and Bickel, P.~J. (2006).
\newblock {Efficient independent component analysis}.
\newblock {\em The Annals of Statistics}, 34(6):2825 -- 2855.

\bibitem[Chen et~al., 2010]{CHEN2010255}
Chen, Y., Härdle, W., and Spokoiny, V. (2010).
\newblock Ghica — risk analysis with gh distributions and independent components.
\newblock {\em Journal of Empirical Finance}, 17(2):255--269.
\newblock Heavy Tails and Paretian Distributions in Empirical Finance. A Volume Honoring Benoît Mandelbrot.

\bibitem[Comon, 1994]{COMON1994287}
Comon, P. (1994).
\newblock Independent component analysis, a new concept?
\newblock {\em Signal Processing}, 36(3):287--314.
\newblock Higher Order Statistics.

\bibitem[Davis and Ng, 2023]{DAVIS2023180}
Davis, R. and Ng, S. (2023).
\newblock Time series estimation of the dynamic effects of disaster-type shocks.
\newblock {\em Journal of Econometrics}, 235(1):180--201.

\bibitem[Deaton, 1988]{2a001323-07d6-31c1-910a-be0b0a4eecf7}
Deaton, A. (1988).
\newblock Quality, quantity, and spatial variation of price.
\newblock {\em The American Economic Review}, 78(3):418--430.

\bibitem[Geary, 1941]{bbfc08fa-3d75-37e2-b3fe-02d4101a7538}
Geary, R.~C. (1941).
\newblock Inherent relations between random variables.
\newblock {\em Proceedings of the Royal Irish Academy. Section A: Mathematical and Physical Sciences}, 47:63--76.

\bibitem[Gouri{\'e}roux et~al., 2017]{gourieroux2017statistical}
Gouri{\'e}roux, C., Monfort, A., and Renne, J.-P. (2017).
\newblock Statistical inference for independent component analysis: Application to structural var models.
\newblock {\em Journal of Econometrics}, 196(1):111--126.

\bibitem[Guay, 2021]{guay2021identification}
Guay, A. (2021).
\newblock Identification of structural vector autoregressions through higher unconditional moments.
\newblock {\em Journal of Econometrics}, 225(1):27--46.

\bibitem[Hausman, 1983]{HAUSMAN1983}
Hausman, J.~A. (1983).
\newblock Chapter 7 specification and estimation of simultaneous equation models.
\newblock In {\em Handbook of Econometrics}, volume~1, pages 391--448. Elsevier.

\bibitem[Heckman et~al., 2006]{doi:10.1086/504455}
Heckman, J., Stixrud, J., and Urzua, S. (2006).
\newblock The effects of cognitive and noncognitive abilities on labor market outcomes and social behavior.
\newblock {\em Journal of Labor Economics}, 24(3):411--482.

\bibitem[Hyvarinen, 1999]{761722}
Hyvarinen, A. (1999).
\newblock Fast and robust fixed-point algorithms for independent component analysis.
\newblock {\em IEEE Transactions on Neural Networks}, 10(3):626--634.

\bibitem[Hyv{\"a}rinen, 2013]{hyvarinen2013independent}
Hyv{\"a}rinen, A. (2013).
\newblock Independent component analysis: recent advances.
\newblock {\em Philosophical Transactions of the Royal Society A: Mathematical, Physical and Engineering Sciences}, 371(1984):20110534.

\bibitem[Jammalamadaka et~al., 2021]{rao2021asymptotic}
Jammalamadaka, S., Taufer, E., and Terdik, G.~H. (2021).
\newblock Asymptotic theory for statistics based on cumulant vectors with applications.
\newblock {\em Scandinavian Journal of Statistics}, 48(2):708--728.

\bibitem[Jensen, 1998]{jensen1998g}
Jensen, A. (1998).
\newblock {\em The G Factor: The Science of Mental Ability}.
\newblock Human Evolution, Behavior, and Intelligence. Bloomsbury Academic.

\bibitem[Kato, 1995]{kato1995perturbation}
Kato, T. (1995).
\newblock {\em Perturbation Theory for Linear Operators}.
\newblock Classics in Mathematics. Springer Berlin Heidelberg.

\bibitem[Lanne and Luoto, 2021]{lanne2021gmm}
Lanne, M. and Luoto, J. (2021).
\newblock Gmm estimation of non-gaussian structural vector autoregression.
\newblock {\em Journal of Business \& Economic Statistics}, 39(1):69--81.

\bibitem[Lanne et~al., 2010]{LANNE2010121}
Lanne, M., Lütkepohl, H., and Maciejowska, K. (2010).
\newblock Structural vector autoregressions with markov switching.
\newblock {\em Journal of Economic Dynamics and Control}, 34(2):121--131.

\bibitem[Lanne et~al., 2017]{LANNE2017288}
Lanne, M., Meitz, M., and Saikkonen, P. (2017).
\newblock Identification and estimation of non-gaussian structural vector autoregressions.
\newblock {\em Journal of Econometrics}, 196(2):288--304.

\bibitem[Lewis, 2021]{lewis2021identifying}
Lewis, D.~J. (2021).
\newblock Identifying shocks via time-varying volatility.
\newblock {\em The Review of Economic Studies}, 88(6):3086--3124.

\bibitem[Lewis, 2025]{annurev:/content/journals/10.1146/annurev-economics-070124-051419}
Lewis, D.~J. (2025).
\newblock Identification based on higher moments in macroeconometrics.
\newblock {\em Annual Review of Economics}, 17(Volume 17, 2025):665--693.

\bibitem[Ludvigson et~al., 2021]{ludvigson2021uncertainty}
Ludvigson, S.~C., Ma, S., and Ng, S. (2021).
\newblock Uncertainty and business cycles: exogenous impulse or endogenous response?
\newblock {\em American Economic Journal: Macroeconomics}, 13(4):369--410.

\bibitem[Magnus, 1985]{371a9e14-3dd5-3f4a-b3ef-88e02cbdde26}
Magnus, J.~R. (1985).
\newblock On differentiating eigenvalues and eigenvectors.
\newblock {\em Econometric Theory}, 1(2):179--191.

\bibitem[McCullagh, 1984]{012b02ed-d7ee-3600-b1f1-5d16bbb9ba81}
McCullagh, P. (1984).
\newblock Tensor notation and cumulants of polynomials.
\newblock {\em Biometrika}, 71(3):461--476.

\bibitem[Mesters and Zwiernik, 2024]{mesters2024non}
Mesters, G. and Zwiernik, P. (2024).
\newblock Non-independent component analysis.
\newblock {\em The Annals of Statistics}, 52(6):2506--2528.

\bibitem[Moneta and Pallante, 2022]{moneta2022identification}
Moneta, A. and Pallante, G. (2022).
\newblock Identification of structural var models via independent component analysis: A performance evaluation study.
\newblock {\em Journal of Economic Dynamics and Control}, 144:104530.

\bibitem[Montiel~Olea et~al., 2022]{10.1257/pandp.20221047}
Montiel~Olea, J.~L., Plagborg-Møller, M., and Qian, E. (2022).
\newblock Svar identification from higher moments: Has the simultaneous causality problem been solved?
\newblock {\em AEA Papers and Proceedings}, 112:481–85.

\bibitem[Reiersøl, 1950]{22ff115f-e119-3e81-8033-1597bfc7003b}
Reiersøl, O. (1950).
\newblock Identifiability of a linear relation between variables which are subject to error.
\newblock {\em Econometrica}, 18(4):375--389.

\bibitem[Rigobon, 2003]{10.1162/003465303772815727}
Rigobon, R. (2003).
\newblock Identification through heteroskedasticity.
\newblock {\em The Review of Economics and Statistics}, 85(4):777--792.

\bibitem[Shimizu et~al., 2006]{shimizu2006linear}
Shimizu, S., Hoyer, P.~O., Hyv{\"a}rinen, A., Kerminen, A., and Jordan, M. (2006).
\newblock A linear non-gaussian acyclic model for causal discovery.
\newblock {\em Journal of Machine Learning Research}, 7(10).

\bibitem[Stewart and Sun, 1990]{stewart1990matrix}
Stewart, G. and Sun, J. (1990).
\newblock {\em Matrix Perturbation Theory}.
\newblock Computer Science and Scientific Computing. Elsevier Science.

\bibitem[Tichavsky and Yeredor, 2009]{4671095}
Tichavsky, P. and Yeredor, A. (2009).
\newblock Fast approximate joint diagonalization incorporating weight matrices.
\newblock {\em IEEE Transactions on Signal Processing}, 57(3):878--891.

\bibitem[Yeredor, 2002]{1011195}
Yeredor, A. (2002).
\newblock Non-orthogonal joint diagonalization in the least-squares sense with application in blind source separation.
\newblock {\em IEEE Transactions on Signal Processing}, 50(7):1545--1553.

\bibitem[Ziehe et~al., 2004]{10.5555/1005332.1016784}
Ziehe, A., Laskov, P., Nolte, G., and M\"{u}ller, K.-R. (2004).
\newblock A fast algorithm for joint diagonalization with non-orthogonal transformations and its application to blind source separation.
\newblock {\em J. Mach. Learn. Res.}, 5:777–800.

\end{thebibliography}

\end{document}